\documentclass[twocolumn]{autart}
\usepackage{epsfig}
\usepackage{epstopdf}
\usepackage{graphicx}
\usepackage[mathscr]{eucal}
\usepackage{amsmath,amsfonts,amssymb,amsbsy,amsopn,amstext}
\usepackage{latexsym}
\usepackage{color,multicol,makeidx}
\usepackage{graphicx}
\usepackage{bm}
\usepackage{mathrsfs}
\usepackage{float}
\usepackage{ccaption}
\usepackage{cite}
\usepackage{natbib}
\usepackage[caption=false,font=normalsize,labelfont=sf,textfont=sf]{subfig}
\usepackage{enumerate}
\usepackage{tabu}
\usepackage{listings}
\def\dref#1{(\ref{#1})}
\def\eq{\displaystyle\stackrel\triangle=}
\def\xra{\xrightarrow}

\newcommand{\diag}{\ensuremath{\mathrm{diag}}}
\DeclareMathOperator*{\rank}{rank} \DeclareMathOperator*{\tr}{Tr}
\DeclareMathOperator*{\argmin}{arg\,min}

\newtheorem{example}{Example}

\newenvironment{proof}[1][Proof]{\noindent\textbf{#1.} }{\hfill \rule{0.5em}{0.5em}}

\begin{document}
	
	\begin{frontmatter}
		
		\title{On Input Design for Regularized LTI System Identification: Power-constrained Input
			\thanksref{footnoteinfo}
		}
		\thanks[footnoteinfo]{This work is
			supported by the Thousand Youth Talents Plan funded by the central
			government of China, the Shenzhen Projects Ji-20170189 and Ji-20160207 funded by the Shenzhen Science and Technology Innovation Council, the
			President's grant under contract No. PF. 01.000249 and the Start-up
			grant under contract No. 2014.0003.23 funded by the Chinese
			University of Hong Kong, Shenzhen, as well as by a research grant
			for junior researchers under contract No. 2014-5894, funded by
			Swedish Research Council.}
\vspace{-3mm}	\author[LIN]{Biqiang Mu}\ead{biqiang.mu@liu.se}
		\ and \author[CUHK]{Tianshi Chen}\ead{tschen@cuhk.edu.cn}
		
		\address[LIN]
		{Division of Automatic Control, Department of Electrical Engineering, Link\"oping University, Link\"oping, 58183, Sweden}
		\address[CUHK]{School of Science and Engineering, The Chinese University of Hong Kong, Shenzhen, 518172, China}
		\begin{keyword}
			Input design, Bayesian mean square error, kernel-based
			regularization, LTI system identification, convex optimization.
		\end{keyword}
		
		\begin{abstract}
		Input design is an important issue for classical system identification methods but has not been investigated for the kernel-based regularization method (KRM) until very recently.
			In this paper, we consider in the time domain the input design problem of KRMs for LTI system identification. Different from the recent result, we adopt a Bayesian perspective and in particular make use of scalar measures (e.g., the 	$A$-optimality, $D$-optimality, and $E$-optimality) of the Bayesian mean square error matrix as the design criteria subject to power-constraint on the input. Instead to solve the optimization problem directly, we propose a two-step procedure. In the first step, by making suitable assumptions on the unknown input, we construct a quadratic map
	(transformation) of the input such that the transformed input design problems are convex, the number of optimization variables is independent of the number of input data, and their global minima can be
	found efficiently by applying well-developed convex optimization software packages. In the second
	step, we derive the expression of the optimal input based on the global minima found in the first step by solving the inverse image of the quadratic map. In addition, we derive analytic results for some special types of fixed kernels, which provide insights on the input design and
also its dependence on the kernel structure.

		\end{abstract}
		
	\end{frontmatter}

	\section{Introduction}
	
	Over the past few years, the kerel-based regularization method (KRM), which was first
	introduced in
	\cite{Pillonetto2010} and then further developed in
	\cite{Pillonetto2011,Chen2012,Chen2014p}, has received increasing attention in the system identification
	community, see e.g., \cite{Pillonetto2014,Chiuso2016} and the
	references therein. It has become a complement to the
	classical maximum likelihood/prediction error methods
	(ML/PEM),\cite{Ljung1999,Soderstrom1989}, which can be justified in a
	couple of aspects. First, the kernel, through which the
	regularization is defined, provides a carrier for prior knowledge on
	the dynamic system to be identified. Second, the model complexity is
	tuned in a continuous manner through the hyperparameter, which is the
	parameter vector used to parameterize the kernel,
	\cite{Pillonetto2015,Mu2017a}. Third, extensive simulation results
	show that KRM can have better average accuracy and robustness than
	ML/PEM for the data that is short and/or has low signal-to-noise
	ratio, \cite{Chen2012,Pillonetto2014}, and as a result, algorithms
	of KRM have been added to the System Identification Toolbox of
	MATLAB \citep{Ljung2015}.

	Most of the recent progress for KRM focus on the issues of kernel
	design and hyperparameter estimation. For the former issue,
	many kernels have been proposed and analyzed to embed
	various kinds of prior knowledge,
	\cite{Chen2016,Chen2017,Carli2017,Marconato2016,Zorzi2017,Pillonetto2016}. In
	particular, two systematic ways are introduced to design kernels in
	\cite{Chen2017}: one is from a machine learning perspective which
	treats the impulse response as a function, and the other one is from
	a system theory perspective which associates the impulse response
	with a linear time-invariant (LTI) system. In \cite{Zorzi2017}, a
	harmonic analysis is first provided for existing kernels including
	the amplitude modulated locally stationary (AMLS) kernel introduced
	in \cite{Chen2017} and then is shown to be a useful tool to design
	more general kernels. In contrast, there are few results reported
	for the issue of hyperparameter estimation
	\cite{Pillonetto2015,Mu2017a}. In particular, it was shown in
	\cite{Mu2017a} that the Stein's unbiased risk estimator (SURE) is
	asymptotically optimal but the widely used empirical Bayes estimator
	is not.
	
	There are some issues for KRM that have not been addressed adequately including the issue of
	input design. There are numerous results on this issue for ML/PEM,  see e.g., the
	survey papers
	\citep{Mehra1974,Hjalmarsson2005,Gevers2005} and the books
	\citep{Goodwin1977,Ljung1999,Zarrop1977}.
	The current state-of-the-art of input design for ML/PEM, see e.g., \cite{Jansson2005,Hildebrand2003,Hjalmarsson2009}, is to solve the problem in a two-step procedure. The first step is to pose the problem in the frequency domain as a convex optimization problem with a linear matrix inequality constraint and then derive the optimal input power spectrum with respect to certain design criteria, and the second step is to derive the realization of the input corresponding to the optimal power spectrum.
	The typical design criteria are scalar measures (e.g., the trace,
	the determinant or the largest eigenvalue) of the \emph{asymptotic} covariance matrix
	of the parameter estimate or the information matrix of ML/PEM
	subject to various constrains on the inputs (e.g., energy or
	amplitude constraints). The typical realization of the input is the filtered white noise  by spectral factorization of the desired input spectrum \citep{Jansson2005,Hjalmarsson2009} or a multisine signal \citep{Hildebrand2003}.
In contrast, there have been no results reported on this issue for KRM
	until very recently in \cite{Fujimoto2016}, where for a fixed kernel
	(a kernel with fixed hyperparameter), the optimal input is derived by
	maximizing the mutual information between the output and the impulse
	response subject to energy-constraint on the input. The
	proposed method in \cite{Fujimoto2016} is very interesting but
	the number of the optimization variables induced by the input
	design problem is equal to the number of data and thus is expensive to
	solve when the number of data is large. Moreover, the induced
	optimization problem is nonconvex and the proposed gradient-based
	algorithm may be inefficient and subject to local minima
	issue. Nevertheless, their simulation result
	looks quite promising and motivates the interest of further investigation.


	In this paper, we treat the input design problem for KRM from
	a perspective different from \cite{Fujimoto2016}. 	
	Similar to \cite{Fujimoto2016}, we also assume that the kernel is fixed (otherwise, it can be estimated from a preliminary experiment), but
	our starting point is different and is the mean square error (MSE)
	matrix of the regularized finite impulse response (FIR) estimate \cite{Chen2012}. Since the MSE
	matrix depends on the unknown true impulse response, we propose to
	make use of the Bayesian interpretation of the KRM and derive the
	so-called Bayesian MSE matrix, which only depends on the fixed
	kernel and the input. It is then possible to use scalar measures of
	the Bayesian MSE matrix as the design criteria to optimize the
	input, e.g., the $A$-optimality, $D$-optimality, and $E$-optimality
	measures. Interestingly, the design criterion
	in \cite{Fujimoto2016} is equivalent to the
	$D$-optimality of the Bayesian MSE matrix introduced here. Moreover, we also treat the unknown input in a way different from \cite{Fujimoto2016} and consider power-constrained input \cite{Goodwin1977} accordingly.

	Instead to solve the optimization problem induced by the input
	design in a direct way, we propose a two-step procedure. In the
	first step, under the assumption on the unknown input, we construct a quadratic map
	(transformation) of the input such that the transformed input
	design problems are convex, the number of optimization variables is equal to the order of the FIR model,  and their global minima can be
	found efficiently by applying well-developed convex optimization
	software packages, such as CVX \citep{Grant2016}.
	In the second
	step, we derive the expression of the optimal input  based on the global minima found in the first step by	solving the inverse image of the quadratic map.
	From an optimization point of view, a similar optimization problem to the underlying one of our method appeared before in  \cite{Hildebrand2003,Jansson2005}. However,  our method is essentially different. First, our input design problem is posed and solved in the time domain. Second, our method is \emph{not} based on \emph{asymptotic theory} and works for any finite number of data.
Third, the optimal input is derived by solving the inverse image of the quadratic map.
	In addition, we derive analytic results for some special types of fixed kernels, which
	provide insights on input design and
also its dependence on the kernel structure.
 In particular, we show that the impulsive input is globally optimal and the white noise input is asymptotically globally optimal for all diagonal kernels,
	but they are often not optimal for kernels with correlation,
	e.g, the diagonal correlated (DC) kernel introduced in
	\cite{Chen2012}.  Finally, numerical simulation results are provided to illustrate the efficacy of our method.

	The remaining parts of this paper are organized as follows. In
	Section \ref{sec2}, we first review briefly KRM. In Section
	\ref{sec3}, we state the problem formulation of the input design
	problem. In Section \ref{sec:main}, we introduce the two-step
	procedure to solve the input design problem. Then we show in Section
	\ref{sec4} that for some fixed kernels, it is possible to derive the
	explicit solution of the optimal input. The numerical simulation is
	given in Section \ref{sec:sim} to demonstrate the proposed method.
	Finally, we conclude the paper in Section \ref{con}. All proofs of
	Theorems, Propositions, and Lemmas are postponed to the Appendix.

\section{Regularized FIR Model Estimation}
\label{sec2}

Consider a single-input single-output linear stable and causal
discrete-time system
\begin{align}
y(t) = G_0(q^{-1})u(t) + v(t).\label{sys}
\end{align}
Here $t\in\mathbb N$ is the time index, $q^{-1}$ is the backshift
operator ($q^{-1} u(t)=u(t-1)$), $y(t),u(t)\in\mathbb R$ are the
output and input of the system at time $t$, respectively,
$v(t)\in\mathbb R$ is a zero mean white noise with variance
$\sigma^2>0$ and is independent of the input $u(t)$, and the
transfer function $G_0(q^{-1})$ of the ``true'' system is
\begin{align}
G_0(q^{-1}) =\sum_{k=1}^\infty g_k^0 q^{-k}\label{tir},
\end{align}
where the coefficients $g_k^0\in\mathbb R,k=1,\cdots,\infty$ form
the impulse response of the system. Further, we assume that the
input $u(t)$ is known (deterministic). The system identification
problem is to estimate a model of $G_0(q^{-1})$ as well as possible
based on the  data $\{u(t-1),y(t)\}_{t=1}^N$.

The stability of $G_0(q^{-1})$ implies that its impulse response
decays to zero, and thus it is always possible to truncate the
infinite impulse response by a high order finite impulse response
(FIR) model:
\begin{align}
G(q^{-1}) =\sum_{k=1}^n g_k q^{-k},~~\theta=[g_1,\cdots,g_n]^T \in
\mathbb{R}^n, \label{fir}
\end{align}
where $n$ is the order of the FIR model and $(\cdot)^T$ denotes the
transpose of a matrix or vector. With the FIR model (\ref{fir}),
system (\ref{sys}) is written as
\begin{align}
y(t)\!=\!\phi^T\!(t) \theta \!+\! v(t),~\phi(t)\!=\![u(t\!-\!1),\cdots,u(t\!-\!n)]^T, \label{fir2}
\end{align}
and its matrix-vector form is
\begin{align}
&\hspace{3em}Y=\Phi \theta + V,\label{fir3}\\
&Y=[y(1)~y(2)\cdots~y(N)]^T,\\
&\Phi=[\phi(1)~\phi(2)~\cdots~\phi(N)]^T,\label{eq:regressM}\\
&V=[v(1)~v(2)~\cdots~v(N)]^T,
\end{align}
where the unknown input $u(t)$ with
	$t=-1,\dots,-n+1$,  can be handled in different ways: e.g., they can be not used (``non-windowed'')
or they can be set to zero (``pre-windowed''); see \cite[p.
320]{Ljung1999} for discussions.

There are different methods to estimate $\theta$ and the simplest
one is perhaps the least squares (LS) method:
\begin{align}
\widehat{\theta}_N^{\rm LS} = \argmin_{\theta} \|Y-\Phi \theta\|^2
=(\Phi^T\Phi)^{-1}\Phi^TY, \label{ls}
\end{align}
where $\|\cdot\|$ is the Euclidean norm. However,
$\widehat{\theta}_N^{\rm LS}$ may have large variance and thus large
mean square error (MSE), as its variance increases approximately
linearly with respect to $n$. One way to mitigate the possible large
variance and reduce the MSE is by using the regularized least
squares (RLS) method, see e.g., \cite{Chen2012}:
\begin{subequations}\label{eq:rls}
	\begin{align}
	\widehat{\theta}_N^{\rm R}=&\argmin_{\theta \in \mathbb{R}^n}\|Y-\Phi\theta\|^2 + \sigma^2\theta^TP^{-1}\theta\\
	=&P\Phi^T(\Phi P \Phi^T + \sigma^2 I_N)^{-1}Y, \label{rls}
	\end{align}
\end{subequations}
where $P$ is positive semidefinite and is called the kernel matrix
($\sigma^2P^{-1}$ is often called the regularization matrix), and
$I_N$ is the $N$-dimensional identity matrix.


Now we let $\theta_0=[g_1^0,\cdots,g_n^0]^T$,  where
$g_1^0,\dots,g_n^0$ are defined in (\ref{tir}) and we then obtain
the MSE matrix of $\widehat{\theta}_N^{\rm R}$:
\begin{align}
\widehat{\theta}^{\rm R}_N& -\theta_0 =-\sigma^2Q^{-1} P^{-1}\theta_0
 + Q^{-1}\Phi^TV,\\
\nonumber
M_N&=\mathrm{E}(\widehat{\theta}_N^{\rm R}-\theta_0) (\widehat{\theta}_N^{\rm R}-\theta_0) ^T\\
&= \sigma^4 Q^{-1}P^{-1}\theta_0\theta_0^TP^{-1}Q^{-1}
\!+\!\sigma^2Q^{-1} \Phi^T \Phi Q^{-1}\label{msem}\\Q &= \Phi^T  \Phi +
\sigma^2 P^{-1},\label{eq:Q}
\end{align} where $\rm E(\cdot)$ is the mathematical expectation. It
has been shown in \citet[Prop. 2]{Mu2017a} that for a suitably
chosen kernel matrix $P$,  $\tr(M_N(\widehat{\theta}_N^{\rm R}))\leq
\tr(M_N(\widehat{\theta}_N^{\rm LS}))$, where $\tr(\cdot)$ is the
trace of a square matrix.

The problem to achieve a good $\widehat{\theta}^{\rm R}_N$ boils
down to the choice of a suitable kernel matrix $P$, which contains
two issues: kernel design and hyperparameter estimation.

\subsection{Kernel Design}

The issue of kernel design is regarding how to embed in a kernel the
prior knowledge of the underlying system to be identified by
parameterizing the kernel with a parameter vector, say $\eta$, called
hyperparameter. The essence of kernel design is analogous to the
model structure design for ML/PEM, and the kernel
determines the underlying model structure for the regularized FIR
model (\ref{rls}). So far, several kernels have been proposed, such
as the stable spline (SS) kernel \citep{Pillonetto2010}, the
diagonal correlated (DC) kernel and the tuned-correlated (TC) kernel
\citep{Chen2012}, the latter two of which are defined as follows:
\begingroup
\allowdisplaybreaks
\begin{align}
\nonumber
&{\rm DC}:~~P_{kj}(\eta) = c\lambda^{(k+j)/2}\rho^{|j-k|},\\
&~~\eta = [c,\lambda,\rho]\in \Omega=\{c\geq 0,0\leq \lambda \leq 1, |\rho|\leq 1\};\label{kdc}\\
\nonumber
&{\rm TC}:~~P_{kj}(\eta) = c\lambda^{\max(k,j)},\\
&~~~~~~~\eta = [c,\lambda]\in\Omega=\{c\geq 0,0\leq \lambda \leq
1\}.\label{ktc}
\end{align} where the TC kernel \dref{ktc} is a special case of the DC kernel with $\rho=\sqrt{\lambda}$ \citep{Chen2012} and is also called the first order SS kernel \citep{Pillonetto2014}.
\endgroup

\subsection{Hyperparameter Estimation}

Once a kernel is designed, the next step is to determine the
hyperparameter based on the data. The essence of hyperparameter
estimation is analogous to the model order selection for ML/PEM, and the hyperparameter determines the model complexity
of the regularized FIR model (\ref{rls}). Several estimation methods
have been suggested in \cite[Section 14]{Pillonetto2014}. The most
widely used method is the empirical Bayes (also called marginal
likelihood maximization) method. The idea is to adopt the Bayesian
perspective and embed the regularization term
$\sigma^2\theta^TP^{-1}\theta$ in a Bayesian framework. More
specifically, we assume in (\ref{fir2}) that $v(t)$ and $\theta$ are
independent and Gaussian distributed with $v(t) \sim
\mathscr{N}(0,\sigma^2)$ and
\begin{align}\label{bs}
\theta \sim \mathscr{N}(0,P).
\end{align}
Then  $\theta$ and $Y$ are jointly Gaussian distributed and moreover, the posterior distribution of $\theta$ given $Y$ is
\begin{align}
&\theta | Y  \sim \mathscr{N}(\widehat{\theta}^{\rm R},\widehat{P}^{\rm R}),\nonumber\\
&\widehat{\theta}^{\rm R}_N = P\Phi^T(\Phi P \Phi^T + \sigma^2 I_N)^{-1}Y,\nonumber \\
&\widehat{P}^{\rm R}_N = P - P\Phi^T(\Phi P \Phi^T + \sigma^2
I_N)^{-1}\Phi\label{eq:postmean}
P=\sigma^2Q^{-1}.
\end{align}
Moreover, the output $Y$ is also Gaussian with zero mean and
covariance matrix $\Phi P(\eta)\Phi^T+\sigma^2 I_N$. Therefore, the
hyperparameter $\eta$ can be estimated by maximizing the marginal
likelihood, or equivalently,
\begin{align}\label{eq:EB}
\text{EB: }\ \widehat{\eta} 
&=\argmin_{\eta\in\Omega} Y^T F(\eta)^{-1}Y +\log \det (F(\eta)),\\
F&=\Phi^TP\Phi+\sigma^2 I_N,\label{eq:F}
\end{align} where $\det(\cdot)$ is the determinant of a matrix.
The EB (\ref{eq:EB}) has the advantage that it is robust
\cite{Pillonetto2015}, but not asymptotical optimal in the
sense of MSE \citep{Mu2017a}.


\section{Problem Formulation for Input Design}
\label{sec3}

\subsection{Bayesian MSE Matrix and Design Criteria}

The goal of input design is to determine, under suitable conditions,
an input sequence such that the regularized FIR model \dref{rls} is
as good as possible. The MSE matrix \dref{msem} is a natural measure
to evaluate how good the regularized FIR model estimate \dref{rls} is.
Unfortunately, the first term of \dref{msem} depends on the true
impulse response $\theta_0$ and thus it can not be used directly.

There are different ways to deal with this difficulty. We adopt
a Bayesian perspective that is similar to the derivation of
(\ref{eq:EB}). More specifically, we assume that \begin{align}\label{eq:ass}
\theta_0 \sim \mathscr{N}(0,P).\end{align} Then taking expectation
on both side of (\ref{msem}) leads to \begin{align} \nonumber  M_N
&=\sigma^4 Q^{-1}P^{-1}Q^{-1}
+\sigma^2Q^{-1} \Phi^T \Phi Q^{-1}\\
\nonumber
&=\sigma^2 Q^{-1} (\sigma^2P^{-1}+ \Phi^T \Phi)Q^{-1}\\
&=\sigma^2 Q^{-1}, \text{\ ``Bayesian MSE''}. \label{bmse}
\end{align}
Since the Bayesian perspective is adopted, (\ref{bmse}) is called
the Bayesian MSE matrix of the regularized FIR model estimate
(\ref{rls}) under the assumption \eqref{eq:ass}. Interestingly, the
Bayesian MSE matrix \dref{bmse} is equal to the posterior covariance
of the regularized FIR model estimate (\ref{eq:postmean}) under the
assumption \eqref{bs}.

We will tackle the input design problem by minimizing a scalar
measure of the Bayesian MSE matrix (\ref{bmse}) of the regularized
FIR model estimate (\ref{rls}). This idea is similar to that of the
traditional input design problem by minimizing a scalar measure of
the asymptotic covariance matrix of the parameter estimate
\citep{Ljung1999}.
The typical $A$-optimality, $D$-optimality, and $E$-optimality
scalar measures of the Bayesian MSE matrix (\ref{bmse}) will be
chosen as the design criteria for the input design problem and given
below:
\begin{align}
&D-\mbox{optimality}: ~~\det (M_N)\eq\det (\sigma^2 Q^{-1}), \label{dcd}\\
&A-\mbox{optimality}: ~~{\rm Tr}(M_N)\eq{\rm Tr}(\sigma^2 Q^{-1}),\label{dca}\\
&E-\mbox{optimality}: ~~\lambda_{\rm max} (  M_N)\eq \lambda_{\rm
	max} (\sigma^2 Q^{-1}),\label{dce}
\end{align} where $\lambda_{\rm max}(\cdot)$ is the largest
eigenvalue of a matrix.

\begin{rem}
	Another possible way to deal with the unknown $\theta_0$ in the MSE
	matrix \dref{msem} is to replace $\theta_0$ by an estimate obtained
	in advance, for example, the LS estimate \dref{ls}. In this case,
	the MSE matrix \dref{msem} becomes
	\begin{align}
	\sigma^4 Q^{-1}P^{-1}\widehat{\theta}_N^{\rm LS}
	(\widehat{\theta}_N^{\rm LS})^TP^{-1}Q^{-1} +\sigma^2R^{-1} \Phi^T
	\Phi Q^{-1}. \label{lsmse}
	\end{align} Clearly, the input design would depend on the
	quality of the chosen estimate.
	
\end{rem}

\begin{rem} Interestingly,  the $D$-optimality measure \dref{dcd} is
	equivalent to the mutual information between the
	true impulse response and the output $Y$ given in
	\cite{Fujimoto2016}.
\end{rem}

\subsection{Problem Statement}

Now we are able to state the input design problem based on
the design criteria \dref{dcd}--\dref{dce}.

Assume that the kernel matrix $P(\eta)$ and the variance $\sigma^2$ of the
noise  are known in advance (otherwise, they can be estimated with a preliminary experiment) and also note that the construction of the regression matrix $\Phi$ in
(\ref{eq:regressM}) requires the inputs $u_{-n+1},\cdots,u_{N-1}$. Then the input design problem is to determine an input sequence
\begin{align}\label{eq:inp_sequence_full}
u_{-n+1},\cdots,u_{-1},u_0,u_1,\cdots,u_{N-1},\end{align} where, for simplicity, $u_t$ is used to denote $u(t)$  hereafter, such that the input sequence (\ref{eq:inp_sequence_full}) minimizes the design criteria \dref{dcd}--\dref{dce} subject to certain constraints.

It should be noted that the inputs $u_{-n+1},\cdots,u_{-1}$ are unknown for the identification problem but can be treated as design variables for the input design problem. In particular, \cite{Fujimoto2016} 
sets the inputs $u_{-n+1},\cdots,u_{-1}$ to be zero, i.e., \begin{align}\label{eq:FS_assoninicond}
  u_{-n+1}=\cdots=u_{-1}=0,
\end{align} 
and essentially minimizes the $D$-optimality measure \dref{dcd} subject to an energy-constraint 
\begin{equation}
\begin{aligned}
\mathcal{U}&=\{u|u^Tu\leq \mathscr{E},u\in\mathbb{R}^{N}\},\\
 u&=[u_0,u_1,\cdots,u_{N-1}]^T,
\label{ic}
\end{aligned}
\end{equation}
where $\mathscr{E}>0$ is a known constant and is the
maximum available energy for the input.

Here, we treat $u_{-n+1},\cdots,u_{-1}$ in a different way. Our idea is not only to reduce the number of optimization variables of the input design problem, but also to bring it certain structure such that it becomes easier to solve, see Section \ref{sec:main} for details.  In
particular, we assume $N\geq n$ and set
\begin{equation}
\begin{aligned}\label{eq:assoninicond}
u_{-i} &= u_{N-i},\quad i=1,\cdots,n-1.
\end{aligned}\end{equation}
The advantage of doing so is that the regression matrix $\Phi$ has
a good structure and becomes a circulant matrix:
\begin{align*}
\Phi=\left[
\begin{array}{cccccc}
u_0&u_{N-1}&\cdots &u_{N-n+2} &u_{N-n+1}\\
u_1&u_0&\cdots &u_{N-n+3}&u_{N-n+2}\\
\vdots&\vdots&\ddots&\vdots&\vdots\\
u_{n-2}&u_{n-3}&\cdots&u_0&u_{N-1}\\
u_{n-1}&u_{n-2}&\cdots&u_1&u_0\\
u_{n}&u_{n-1}&\cdots&u_2&u_1\\
\vdots&\vdots&\ddots&\vdots&\vdots\\
u_{N-1}&u_{N-2}&\cdots&u_{N-n+1}&u_{N-n}
\end{array}
\right].
\end{align*}
Moreover, we consider the power constraint
on the input sequence (\ref{eq:inp_sequence_full}) used for the traditional ML/PEM input design problem, see e.g.,
\cite[p. 129, eq. (6.3.12)]{Goodwin1977}, i.e., 
\begin{align}\label{eq:constraint}
\sum_{t=1}^N u_{t-i}^2=\mathscr{E}, \ i=1,\dots,n,
\end{align} 
Under the assumption (\ref{eq:assoninicond}), the input design problem of minimizing the design criteria \dref{dcd}--\dref{dce} subject to the constraint (\ref{eq:constraint}) can be equivalently written as follows:
\begingroup
\allowdisplaybreaks
\begin{align}
&u^*_{\rm D}\eq\arg\min_{u\in \mathscr{ U}} {\rm \det} (\sigma^2 Q^{-1}), \label{dop_o}\\
&u^*_{\rm A}\eq\arg\min_{u\in \mathscr{ U}} {\rm Tr} (\sigma^2 Q^{-1}), \label{aop_o}\\
&u^*_{\rm E}\eq\arg\min_{u\in \mathscr{ U}} \lambda_{\rm max}
(\sigma^2 Q^{-1}), \label{eop_o}
\end{align}
\endgroup
where the constraint (\ref{eq:constraint}) under the assumption (\ref{eq:assoninicond}) becomes
\begin{equation}
\begin{aligned}
\mathscr{U}&=\{u|u^Tu= \mathscr{E},u\in\mathbb{R}^{N}\},\\
u&=[u_0,u_1,\cdots,u_{N-1}]^T,
\label{ic_bar}
\end{aligned}
\end{equation}
It is worth to note that the optimal input is not unique in general.
To check this, note that if an input sequence $u^*$ is optimal, then
$-u^*$ is also optimal, where $u^*$ can be any of $u^*_{\rm
	D},u^*_{\rm A}$ and $u^*_{\rm E}$. Therefore, $u^*_{\rm D},u^*_{\rm
	A},u^*_{\rm E}$ should be understood as the set of all optimal
inputs minimizing \dref{dop_o}, \dref{aop_o}, \dref{eop_o}, respectively.

%
%
%
%
%
%

\section{Main Results}

\label{sec:main}

Similar to \cite{Fujimoto2016}, we can also try to solve the input
design problems \dref{dop_o}--\dref{eop_o} by using gradient-based
algorithms. However, such algorithms may have the following problems:
\begin{enumerate}[1)]
	\item the input design
	problems \dref{dop_o}--\dref{eop_o} is expensive to solve for large $N$,
	because the number of optimization variables is $N$.
	
	\item the gradient-based algorithms may be subject to the issue of local
	minima, because the input design problems
	\dref{dop_o}--\dref{eop_o} are non-convex.

\end{enumerate}

We propose to solve the input design problems
\dref{dop_o}--\dref{eop_o} in a two-step procedure, and its idea is
sketched below:
\begin{enumerate}[1)]
	\item under the assumption (\ref{eq:assoninicond}), we construct a quadratic map
	(transformation) of $u$, say $f(u)$, such that the transformed input
	design problems of \dref{dop_o}--\dref{eop_o} are convex and
	thus can be solved efficiently.

	\item by exploiting the properties of $f(\cdot)$ and based on the globally minima of the transformed input design problems of
	\dref{dop_o}--\dref{eop_o}, we  derive the optimal inputs
	by solving the inverse of the map $f(\cdot)$.
\end{enumerate}

\subsection{A Quadratic Map}

Under the assumption (\ref{eq:assoninicond}), we define
\begin{equation}
\begin{aligned}
\label{rd} r&=[r_0,r_1,\cdots,r_{n-1}]^T,\\
r_j&=\sum_{k=0}^{N-1}u_{k}u_{k-j},\ j=0,\dots, n-1.
\end{aligned}
\end{equation}
Then we have
\begin{align}
R\eq\Phi^T\Phi =\left[
\begin{array}{cccccc}
r_0&r_{1}&\cdots &r_{n-2} &r_{n-1}\\
r_1&r_0&\ddots &r_{n-3}&r_{n-2}\\
\vdots&\ddots&\ddots&\ddots&\vdots\\
r_{n-2}&r_{n-3}&\ddots&r_0&r_{1}\\
r_{n-1}&r_{n-2}&\cdots&r_1&r_0\\
\end{array}
\right],\label{phiphi}
\end{align}
which is a positive semidefinite Toeplitz matrix.

Actually, the definition \dref{rd} determines a quadratic
vector-valued function
\begin{align}\label{eq:defmap}r=f(u)=[f_0(u),\cdots,f_{n-1}(u)]^T.
\end{align}
The domain of $f(\cdot)$ is
$\mathscr{U}$ which has been defined in \eqref{ic_bar} and is convex and compact,
and the $j$-element of $f(\cdot)$ is
\begin{align}
&f_j(u)=u^TL_ju,\  j=0,\dots, n-1,\\ &L_j= \frac12\left[
\begin{array}{cc}
0&I_{N-j}\\
I_{j}&0
\end{array}
\right] + \frac12\left[
\begin{array}{cc}
0&I_{j}\\
I_{N-j}&0
\end{array}
\right].
\end{align}
It follows that the corresponding value $r_0$ is $\mathscr{E}$ for any input $u\in \mathscr{U}$.
Moreover, denote the image of $f(\cdot)$ under $\mathscr{U}$ by
\begin{align}
\mathscr{F}=\{f(u)|u\in\mathscr{U}\},\label{fi}
\end{align}
which is a convex polytope (See Theorem \ref{thm2a} below).

\subsection{Transformed Input Design Problem}

Interestingly, the quadratic map (\ref{eq:defmap}) makes the
transformed input design problems of \dref{dop_o} to \dref{eop_o} convex.

To show this, note that the matrix $Q$ in (\ref{eq:Q}) can be
written as
\begin{align}
Q(r)=r_0I_n+r_1Q_1+\cdots+r_{n-1}Q_{r-1}+\sigma^2P^{-1},
\end{align}
where $Q_i\in\mathbb R^{n\times n}$, $i=1,\dots, n-1$ is symmetric
and zero everywhere except the $(jl)$-element which is one if
$|j-l|=i$ with $j,l=1,\dots,n$. The map (\ref{eq:defmap}) transforms
the input design problems \dref{dop_o}--\dref{eop_o} to the following
optimization problems, respectively,
\begin{align}
&r^*_{\rm D} =\argmin_{r\in \mathscr{F}} \log\det(\sigma^2 Q(r)^{-1})\label{odr}\\
&r^*_{\rm A}=\argmin_{r\in \mathscr{F}} {\rm Tr}(\sigma^2 Q(r)^{-1})\label{oar}\\
&r^*_{\rm E}=\argmin_{r\in \mathscr{F}} \lambda_{\rm max} (\sigma^2 Q(r)^{-1})\label{oer}
\end{align}
where $\log\det(\cdot)$ is used instead of $\det(\cdot)$.

\begin{prop}\label{prop1} Consider the transformed input design problems
	\dref{odr} to \dref{oer}. It follows that
	\begin{itemize}
		\item The problems \dref{odr} and \dref{oar} are
		strictly convex and thus $r^*_{\rm D}$ and $r^*_{\rm A}$ are unique
		in $\mathscr{F}$.

		\item The problem \dref{oer} is convex and $r^*_{\rm E}$ exists but may
		be not unique in $\mathscr{F}$.
		
	\end{itemize}

\end{prop}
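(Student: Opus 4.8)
The plan is to reduce the proposition to classical convexity facts about matrix functionals on the positive-definite cone, pulled back through the affine parametrization $r\mapsto Q(r)$. First I would record two structural observations. (i) The map $r\mapsto Q(r)=\sigma^2P^{-1}+r_0I_n+\sum_{i=1}^{n-1}r_iQ_i$ is affine in $r$, and its linear part is \emph{injective}: the matrices $I_n,Q_1,\dots,Q_{n-1}$ have pairwise disjoint supports (the diagonal and the successive off-diagonal bands), hence are linearly independent in the space of symmetric matrices. (ii) For every $r\in\mathscr{F}$ the Toeplitz matrix $R(r)=\Phi^T\Phi$ in \eqref{phiphi} is positive semidefinite, so $Q(r)\succeq\sigma^2P^{-1}\succ0$; consequently $\mathscr{F}$ is mapped by $r\mapsto Q(r)$ into $\mathbb{S}^n_{++}$, the set of symmetric positive definite $n\times n$ matrices, and all three objectives in \eqref{odr}--\eqref{oer} are finite and smooth on $\mathscr{F}$. (This is also why $\log\det(\cdot)$ is used in \eqref{odr} in place of $\det(\cdot)$.)

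Next I would invoke the following standard facts from convex analysis: on $\mathbb{S}^n_{++}$ the function $X\mapsto-\log\det X$ is strictly convex; the function $X\mapsto{\rm Tr}(X^{-1})$ is strictly convex, since its second directional derivative along $H$ equals $2\,{\rm Tr}(X^{-1}HX^{-1}HX^{-1})=2\,{\rm Tr}(X^{-1}Z^2)$ with $Z=X^{-1/2}HX^{-1/2}$, which is strictly positive whenever $H\neq0$; and the function $X\mapsto\lambda_{\rm max}(X^{-1})=1/\lambda_{\rm min}(X)$ is convex, being the composition of the concave map $X\mapsto\lambda_{\rm min}(X)$ with the convex nonincreasing scalar function $t\mapsto1/t$ on $(0,\infty)$, but it is in general not strictly convex. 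Since composing a convex function with an affine map is convex, and composing a strictly convex function with an \emph{injective} affine map is strictly convex, observation (i) gives that the objectives of \eqref{odr} and \eqref{oar} are strictly convex in $r$ on $\mathscr{F}$, while the objective of \eqref{oer} is convex in $r$ on $\mathscr{F}$.

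Finally, by Theorem \ref{thm2a} the feasible set $\mathscr{F}$ is a convex polytope, hence nonempty, convex and compact. A strictly convex function on a nonempty convex compact set attains its infimum at a unique point, which yields existence and uniqueness of $r^*_{\rm D}$ and $r^*_{\rm A}$; a continuous function on a nonempty compact set attains its infimum (Weierstrass), which yields existence of $r^*_{\rm E}$; and because the objective of \eqref{oer} is merely convex, its optimal set can be a whole face of $\mathscr{F}$, so uniqueness need not hold — I would illustrate this with a small explicit instance (for example, a kernel for which the worst eigendirection of $Q(r)^{-1}$ is common to a segment of feasible $r$). Note that the familiar sign ambiguity $u^*\mapsto-u^*$ does \emph{not} itself create non-uniqueness in $r$, because the quadratic map satisfies $f(-u)=f(u)$.

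I expect the only genuine work to be the two strictness points: verifying injectivity of $r\mapsto Q(r)$ so that strict convexity survives the affine pull-back, and confirming that ${\rm Tr}(X^{-1})$ is \emph{strictly} (not merely) convex through the Hessian computation above rather than a one-line composition argument. Everything else — convexity of the three matrix functionals, and that $\mathscr{F}$ is mapped into $\mathbb{S}^n_{++}$ so the problems are well posed — is routine given $P\succ0$ and $R(r)\succeq0$, together with Theorem \ref{thm2a} for the geometry of $\mathscr{F}$.
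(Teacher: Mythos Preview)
Your proof is correct. The paper itself does not supply a proof of Proposition~\ref{prop1} in the appendix --- the result is evidently treated as an immediate consequence of standard convexity properties of $-\log\det X$, $\mathrm{Tr}(X^{-1})$, and $\lambda_{\max}(X^{-1})$ on the positive-definite cone, combined with the polytope structure of $\mathscr{F}$. Your write-up fills in precisely those details, and the two non-automatic points you single out --- the injectivity of the affine map $r\mapsto Q(r)$ (so that strict convexity survives the pullback) and the explicit Hessian computation showing $\mathrm{Tr}(X^{-1})$ is \emph{strictly} convex --- are exactly what a careful proof needs. One minor remark: you cite Theorem~\ref{thm2a} for the compactness and convexity of $\mathscr{F}$, but Theorem~\ref{thm1} already states that $\mathscr{F}$ is a convex polytope, which is enough.
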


\begin{rem}
From an optimization point of view, the optimization problems \dref{odr} to \dref{oer}, modulo the constraint, appeared before, see e.g., \cite{Jansson2005}. However, the contexts are different: the elements of $r$ are the auto-correlation coefficients of the input power spectrum in \cite{Jansson2005} but they do not have such interpretation here unless we divide them by $N$ and let $N$ go to $\infty$.
In addition, the positivity constraint of the input spectrum is transformed into a linear matrix inequality by applying the Kalman-Yakubovich-Popov lemma in \cite{Jansson2005},
	while the constraint here is the feasible set $\mathscr{F}$, which is a convex polytope (See Remark \ref{rem6} below).
\end{rem}

\begin{rem}
	An optimal $r^*$ satisfies $r_0^*=\mathscr{E}$, where
	$r_0^*$ is the first element of $r^*$, and $r^*$ could be $r^*_{\rm D}$ $r^*_{\rm A}$, and $r^*_{\rm E}$.
	
\end{rem}

\begin{rem}
For $ i=1,\dots, n-1$, the elements of the gradient of \dref{odr} and \dref{oar} are respectively given by
	\begin{align}
	&\frac{\partial \log\det(\sigma^2 Q(r)^{-1})}{\partial r_i}
	\! =\! -{\rm Tr} (Q(r)^{-1}Q_i),\label{gvd}\\
	&\frac{\partial{\rm Tr}( \sigma^2 Q(r)^{-1})}{\partial r_i} = - \sigma^2{\rm Tr} (Q(r)^{-2}Q_i).\label{gva}
	\end{align}
	
\end{rem}

\subsection{Finding Optimal Inputs }

Now we assume that the global minima of the transformed input
design problems \dref{odr} to \dref{oer} have been found and denoted
by $r^*\in \mathscr{F}$, which could be any one of $r^*_{\rm
	D},r^*_{\rm A},$ and $r^*_{\rm E}$. Then we consider the problem how
to derive the optimal input $u^*$ from $r^*\in \mathscr{F}$ by exploiting the properties of the quadratic map $f(\cdot)$ and the inverse map of $f(\cdot)$ for a given vector $r\in
\mathscr{F}$.

\subsubsection{Properties of the Quadratic Map}

The map $f(\cdot)$ has the following properties, which ease the derivation of the optimal input
as will be seen shortly.
\begin{thm}
	\label{thm1} Consider the map $f(\cdot)$ defined in \eqref{eq:defmap}. Then there exists a matrix $S\in{\mathbb R}^{n\times N}$ and an orthogonal matrix $W\in{\mathbb R}^{N\times N}$ such that the map $f(\cdot)$ can be written in a composite form as follows:
	\begin{align}
	f(u) &= h_1(h_2(h_3(u)))~~\mbox{with} \label{m1}\\
	h_1(x)&=Sx\\
	h_2(z)&=[z_0^2,z_1^2,\cdots,z_{N-1}^2]^T\\
	h_3(u) &= W^T u \label{m4}
	\end{align}
	where $x=[x_0,x_1,\cdots,x_{N-1}]^T$, $z=[z_0,z_1,\cdots,z_{N-1}]^T$,
	Moreover, the image  of $h_3(\cdot)$ is
	$$\mathscr{Z}=\{h_3(u)|u\in \mathscr{U}\}=\{z|z^Tz= \mathscr{E}\},$$
	the image  of $h_2(\cdot)$
	\begin{align}
	\mathscr{X}&=\{h_2(z)|z\in \mathscr{Z}\}\\
	&=\left\{x\Big|\sum_{i=0}^{N-1}x_i\! =\! \mathscr{E},x_i \geq 0,i\!=\!0,1,\dots,N\!-\!1\right\},\label{ih2}
	\end{align}
	is a convex polytope
	and the image of $h_1(\cdot)$
	\begin{align}
	\mathscr{F}=\{f(u)|u\in\mathscr{U}\} =\{Sx|x\in \mathscr{X}\}
	\end{align}
	is also a convex polytope.
\end{thm}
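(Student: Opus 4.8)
The plan is to exploit that each component of $f(\cdot)$ is a quadratic form $f_j(u)=u^T L_j u$ in which $L_j$ is a \emph{symmetric circulant} matrix. Indeed, from the displayed expression for $L_j$ one checks that the block matrix $\bigl[\begin{smallmatrix}0&I_{N-j}\\ I_j&0\end{smallmatrix}\bigr]$ is the permutation matrix implementing the cyclic shift by $j$ and its transpose implements the shift by $-j$, so $L_j$ is one half of their sum, hence symmetric and circulant, with $L_0=I_N$. Since circulant matrices pairwise commute and the $L_j$ are in addition symmetric, the spectral theorem for a commuting family of symmetric matrices furnishes a single \emph{real} orthogonal $W\in\mathbb{R}^{N\times N}$ with $W^TL_jW=\mathrm{diag}(S_{j0},\dots,S_{j,N-1})$, the $S_{jm}$ being real. (Equivalently one may take $W$ to be the standard real cosine/sine Fourier basis, with the $S_{jm}$ the corresponding Fourier symbols of $L_j$; the only subtlety is insisting on a genuinely real orthogonal $W$ rather than the complex DFT, which is possible exactly because the $L_j$ are symmetric.) Then with $h_3(u)=W^Tu=:z$, $h_2(z)=(z_0^2,\dots,z_{N-1}^2)^T$, and $h_1(x)=Sx$ where $S=(S_{jm})\in\mathbb{R}^{n\times N}$ (one row per $j=0,\dots,n-1$, one column per eigenvalue), we get $f_j(u)=u^TL_ju=z^T(W^TL_jW)z=\sum_m S_{jm}z_m^2$, i.e.\ $f=h_1\circ h_2\circ h_3$, which is the asserted composite form.

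It then remains to compute the three images in turn. For $\mathscr{Z}$: orthogonality of $W$ gives $\|W^Tu\|^2=\|u\|^2=\mathscr{E}$ on $\mathscr{U}$, and conversely any $z$ with $z^Tz=\mathscr{E}$ equals $h_3(Wz)$ with $Wz\in\mathscr{U}$, so $\mathscr{Z}=\{z:z^Tz=\mathscr{E}\}$. For $\mathscr{X}$: for $z\in\mathscr{Z}$ the vector $x=h_2(z)$ satisfies $x_i=z_i^2\ge 0$ and $\sum_i x_i=z^Tz=\mathscr{E}$, while any $x$ with $x_i\ge 0$, $\sum_i x_i=\mathscr{E}$ is the image of $(\sqrt{x_0},\dots,\sqrt{x_{N-1}})\in\mathscr{Z}$; hence $\mathscr{X}$ is the scaled standard simplex, namely the convex hull of the $N$ points $\mathscr{E}e_0,\dots,\mathscr{E}e_{N-1}$ (with $e_i$ the standard basis vectors), a convex polytope. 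Finally $\mathscr{F}=h_1(\mathscr{X})=S\mathscr{X}=\mathrm{conv}\{\mathscr{E}\,Se_0,\dots,\mathscr{E}\,Se_{N-1}\}$ is a linear image of a polytope, hence itself a polytope; and composing the three surjections yields $\mathscr{F}=f(\mathscr{U})$, as stated.

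The only genuinely delicate point I expect is the first step: producing a real orthogonal $W$ that simultaneously diagonalizes the entire family $\{L_j\}_{j=0}^{n-1}$. The cleanest argument is that circulant matrices commute and that symmetric commuting matrices share an orthonormal eigenbasis, but it rests on first verifying — a short, slightly fiddly block-matrix computation, most transparent for small $N$ — that the $L_j$ in the statement are indeed symmetric circulant, and on tracking that the resulting matrix of eigenvalues has the claimed size $n\times N$. Everything afterwards (the three image identities and the polytope assertions) is routine, using only orthogonality of $W$, surjectivity of coordinatewise squaring onto the simplex, and stability of polytopes under linear maps.
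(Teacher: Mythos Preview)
Your proposal is correct and follows essentially the same route as the paper: the paper writes $L_j=\tfrac12(C^j+(C^j)^T)$ with $C$ the cyclic permutation matrix, verifies commutativity from $C^iC^j=C^{i+j}$, invokes simultaneous orthogonal diagonalization of commuting real symmetric matrices to obtain $W$ and the eigenvalue matrix $S$, and then computes the three images exactly as you do. The only cosmetic difference is that you phrase the commutativity as ``circulant matrices commute'' whereas the paper checks it directly via the powers of $C$; the remaining steps (orthogonality of $W$ for $\mathscr{Z}$, coordinatewise squaring onto the simplex for $\mathscr{X}$, and linear images of polytopes for $\mathscr{F}$) are identical.
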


One choice for the matrices $S$ and $W$ in Theorem \ref{thm1} is
given in the following theorem.

\begin{thm}
	\label{thm2a}
	Let $\varpi=2\pi/N$ and define the vectors
	\begin{align*}
	&\xi_j= \left[
	\begin{array}{c}
	1\\
	\cos(j\varpi)\\
	\cos(2j\varpi)\\
	\vdots\\
	\cos((N\!\!-\!1)j\varpi)
	\end{array}
	\right],~ \zeta_j= \left[
	\begin{array}{c}
	0\\
	\sin(j\varpi)\\
	\sin(2j\varpi)\\
	\vdots\\
	\sin((N\!\!-\!1)j\varpi)
	\end{array}
	\right]
	\end{align*}
	for $j=0,1,\dots$. Then, one choice for the matrices $W$ and $S$ in
	Theorem \ref{thm1} is
	
	1) when $N$ is even,
	\begin{align}
	&W\!=\!\sqrt{\frac{2}{N}} \left[
	\frac{\xi_0}{\sqrt{2}},\xi_1,\cdots,\xi_{\frac{N-2}{2}},\frac{\xi_{\frac{N}{2}}}{\sqrt{2}},\zeta_{\frac{N-2}{2}},\cdots,\zeta_1
	\right]\label{we}\\
	\nonumber
	&S=\Big[\xi_0,\xi_1,\cdots,\xi_{n-1}\Big]^T\\
	&\hspace{0.6em}=\left[\xi_0(1\!:\!n),\xi_1(1\!:\!n),\cdots,\xi_\frac{N}{2}(1\!:\!n),\cdots,
	\xi_1(1\!:\!n)\right]\!\!\!\!\label{seven}
	\end{align}
	and $\rank(S)=\min(N/2\!+\!1,n)$. Also,
	\begin{align}
	\mathscr{F}\!=\!\mathbf{conv}\left\{\mathscr{E}\xi_0(1\!:\!n),\mathscr{E}\xi_1(1\!:\!n),\cdots,\mathscr{E}\xi_\frac{N}{2}(1\!:\!n)
	\right\}\label{conhe}
	\end{align}
	%
	where $\mathbf{conv}\{\cdots\}$ is the convex hull of the corresponding vectors and $\xi_j(1\!:\!n)$ is the vector consisting of the first $n$
	elements of $\xi_j$.

	2) when $N$ is odd,
	\begin{align}
	&W=\sqrt{\frac{2}{N}} \left[
	\frac{\xi_0}{\sqrt{2}},\xi_1,\cdots,\xi_{\frac{N-1}{2}},\zeta_{\frac{N-1}{2}},\cdots,\zeta_1
	\right]\\
	\nonumber
	&S=\Big[\xi_0,\xi_1,\cdots,\xi_{n-1}\Big]^T\\
	\nonumber
	&\hspace{0.7em}=\left[\xi_0(1\!:\!n),\xi_1(1\!:\!n),\cdots,\xi_\frac{N-1}{2}(1\!:\!n),\right.\\
	&\hspace{7em}\left.\xi_\frac{N-1}{2}(1\!:\!n),\cdots,
	\xi_1(1\!:\!n)\right]\!\!\!\!\label{sodd}
	\end{align}
	and $\rank(S)=\min((N\!+\!1)/2,n)$. In addition,
	\begin{align}
	\mathscr{F}\!=\!\mathbf{conv}\left\{\mathscr{E}\xi_0(1\!:\!n),\mathscr{E}\xi_1(1\!:\!n),\cdots,\mathscr{E}\xi_\frac{N\!-\!1}{2}(1\!:\!n)
	\right\}.\label{conho}
	\end{align}
	
\end{thm}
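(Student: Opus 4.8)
\noindent The plan is to check that the displayed $W$ and $S$ satisfy the two requirements defining a valid choice in Theorem \ref{thm1} — that $W$ is orthogonal and that $f(u)=S\,h_{2}(W^{T}u)$ — and then to read off the forms of $\mathscr{Z},\mathscr{X},\mathscr{F}$ directly from Theorem \ref{thm1}, adding the two extra statements about $\rank(S)$ and the vertex description of $\mathscr{F}$.

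First I would verify $W^{T}W=I_{N}$. Up to the factors $\sqrt{2/N}$ and $1/\sqrt{N}$, the columns of $W$ are the vectors $\xi_{0},\xi_{1},\dots$ and $\zeta_{1},\zeta_{2},\dots$, and the elementary sums $\sum_{t=0}^{N-1}\cos(at\varpi)\cos(bt\varpi)$, $\sum_{t=0}^{N-1}\sin(at\varpi)\sin(bt\varpi)$, $\sum_{t=0}^{N-1}\cos(at\varpi)\sin(bt\varpi)$ (equivalently orthogonality of the roots of unity $e^{ik\varpi}$) show these vectors are pairwise orthogonal, with $\|\xi_{0}\|^{2}=\|\xi_{N/2}\|^{2}=N$ and $\|\xi_{j}\|^{2}=\|\zeta_{j}\|^{2}=N/2$ at the intermediate indices; the stated normalizations then make the columns orthonormal. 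The same computation shows in passing that the distinct rows of $S$ — namely $\xi_{0},\dots,\xi_{n-1}$ viewed as $N$-vectors, of which $\min(N/2+1,n)$ are distinct for even $N$ (resp. $\min((N+1)/2,n)$ for odd $N$, since $\xi_{j}=\xi_{N-j}$) — are mutually orthogonal, hence linearly independent, which already yields the stated value of $\rank(S)$.

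The core step is the composite identity. Under assumption \eqref{eq:assoninicond}, $\Phi$ consists of the first $n$ columns of the $N\times N$ circulant matrix with first column $[u_{0},\dots,u_{N-1}]^{T}$, so $R=\Phi^{T}\Phi$ is the leading $n\times n$ block of the circulant matrix $\tilde R$ with entries $r_{(k-l)\bmod N}$, matching \eqref{phiphi}, where $r_{j}=\sum_{l=0}^{N-1}u_{l}u_{(l-j)\bmod N}$. Being circulant, $\tilde R$ has eigenvalues equal to the DFT of its first row, and by the convolution theorem these equal $\mu_{k}:=\bigl|\sum_{l}u_{l}e^{-ilk\varpi}\bigr|^{2}=(\xi_{k}^{T}u)^{2}+(\zeta_{k}^{T}u)^{2}\ge 0$; inverting the DFT and using reality gives $r_{j}=\tfrac1N\sum_{k=0}^{N-1}\mu_{k}\cos(jk\varpi)$. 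On the other hand, with $z=W^{T}u$ the definition of $W$ gives $z_{0}^{2}=\tfrac1N\mu_{0}$, $z_{N/2}^{2}=\tfrac1N\mu_{N/2}$ (even $N$ only, using $\zeta_{0}=\zeta_{N/2}=0$), and $z_{k}^{2}+z_{N-k}^{2}=\tfrac2N\mu_{k}$ for the paired indices. Substituting these into $\sum_{m=0}^{N-1}\cos(jm\varpi)z_{m}^{2}$ and using $\cos(j(N-k)\varpi)=\cos(jk\varpi)$ and $\cos(jN\varpi/2)=(-1)^{j}$, this sum collapses term by term to $\tfrac1N\sum_{k}\mu_{k}\cos(jk\varpi)=r_{j}$ for each $j=0,\dots,n-1$; that is, $f(u)=S\,h_{2}(W^{T}u)$. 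The odd-$N$ case is identical with the $\mu_{N/2}$ term absent.

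With the composite form in hand, Theorem \ref{thm1} yields $\mathscr{Z}=\{z:z^{T}z=\mathscr{E}\}$, the simplex \eqref{ih2} for $\mathscr{X}$, and $\mathscr{F}=\{Sx:x\in\mathscr{X}\}$. Since $\mathscr{X}$ is $\mathscr{E}$ times the standard simplex, with vertices the scaled standard basis vectors $\mathscr{E}e_{0},\dots,\mathscr{E}e_{N-1}$, and the image of a polytope under the linear map $S$ is the convex hull of the images of its vertices, $\mathscr{F}=\mathbf{conv}\{\mathscr{E}\,Se_{m}:m=0,\dots,N-1\}=\mathbf{conv}\{\mathscr{E}\,\xi_{m}(1\!:\!n):m=0,\dots,N-1\}$, since column $m$ of $S$ equals $\xi_{m}(1\!:\!n)$; collapsing the repeats $\xi_{m}(1\!:\!n)=\xi_{N-m}(1\!:\!n)$ gives \eqref{conhe} (resp. \eqref{conho}). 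I expect the only real obstacle to be the bookkeeping in the core step: matching each coordinate $z_{m}$ to the correct $\xi_{k}$ or $\zeta_{k}$, tracking the $\sqrt{2/N}$ versus $1/\sqrt{N}$ normalizations, and handling the special index $N/2$ for even $N$, so that the autocorrelation formula for $r_{j}$ aligns with the weighted sum of the $z_{m}^{2}$ on the nose.
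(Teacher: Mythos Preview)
Your proposal is correct and reaches the same conclusions as the paper, but the core step follows a somewhat different path. The paper establishes the composite identity $f(u)=S\,h_2(W^Tu)$ by showing (in a separate lemma) that $W$ simultaneously diagonalizes all the matrices $L_j$: since each $L_j=(C^j+(C^j)^T)/2$ is a symmetric circulant, the standard complex eigenvectors $v^{(m)}=N^{-1/2}[1,e^{-im\varpi},\dots]^T$ of circulants give, after taking real and imaginary parts, the real orthonormal columns of $W$, and the corresponding eigenvalues are $\cos(mj\varpi)$, so $W^TL_jW=\diag(\xi_j)$ and hence $f_j(u)=u^TL_ju=\sum_m\cos(mj\varpi)z_m^2$ directly. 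You instead bypass the individual $L_j$'s and work with the full autocorrelation: using that $\tilde R$ is circulant with DFT eigenvalues $\mu_k=|\hat u_k|^2$, you invert to obtain $r_j=N^{-1}\sum_k\mu_k\cos(jk\varpi)$ and then match this against $\sum_m\cos(jm\varpi)z_m^2$ by grouping the paired coordinates $z_k,z_{N-k}$. Both arguments rest on the same circulant/DFT structure; the paper's route makes the simultaneous-diagonalization picture explicit (and that picture is reused later, e.g.\ in describing $\mathscr K(S)$), while yours is a self-contained verification that avoids the extra lemma at the cost of the convolution-theorem bookkeeping you already flag. The $\rank(S)$ argument and the convex-hull description of $\mathscr F$ are handled essentially the same way in both.
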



\begin{rem}
	\label{rem6}
	The transformed input design problems \dref{odr}, \dref{oar},
	\dref{oer} can be solved effectively by CVX \citep{Grant2016}. The
	optimization criteria of \dref{odr}, \dref{oar}, \dref{oer} are
	standard in CVX and we only need to rewrite the feasible set
	$\mathscr{F}$ as a number of linear equalities and inequalities by the definition of convex hull \dref{conhe} or \dref{conho}:
	\begin{itemize}
		\item when $N$ is even, we have from \dref{conhe} that
		\begin{align}\label{fsc1}
		\begin{array}{l}
		\mathscr{E}S(:,1\!:\!N/2\!+\!1) a=r\\
		a=[a_1,a_2,\cdots,a_{N/2+1}]^T,\ a_j\geq 0
		\end{array}
		\end{align}
		where $r\in\mathscr{F}$, $S(:,1\!:\!N/2\!+\!1)$ denotes the first $N/2\!+\!1$ columns of $S$,
		and the first equality constraint is $\sum_{j=1}^{N/2+1} a_j =1$ since the first element of $r$ is $\mathscr{E}$ and all the elements of the first row of $S$ are one.
		\item when $N$ is odd, we have from \dref{conho} that
		\begin{align}\label{fsc2}
		\begin{array}{l}
		\mathscr{E}S(:,1\!:\!(N\!+\!1)/2) a=r\\
		a=[a_1,a_2,\cdots,a_{(N+1)/2}]^T,\ a_j\geq 0.
		\end{array}
		\end{align}
		
	\end{itemize}
	Since the columns of $S$ are symmetric (See \dref{seven} and
	\dref{sodd}),  $\mathscr{F}$ can be uniformly expressed by for even and odd $N$
	\begin{align}\label{fsc3}
	\begin{array}{l}
	\mathscr{E}S a=r\\
	a=[a_1,a_2,\cdots,a_N]^T,\ a_j\geq 0.
	\end{array}
	\end{align}
	
\end{rem}

\subsubsection{Derivation of the Optimal Input}

Given any $r\in \mathscr{F}$, the inverse image $f^{-1}(r)$ of
$f(\cdot)$ can be derived in the following way based on \dref{m1}--\dref{m4}:
\begin{enumerate}[1)]
	\item we find the inverse image of $h_1(\cdot)$ for $r\in \mathscr{F}$:
	\begin{align}
	\nonumber
	\mathscr{X}(r)
	&\eq\left\{x |Sx =r,x \in\mathscr{X} \right\}\\
	&=\{\mathscr{K}(S) \oplus S^{\dagger}r\} \cap \mathscr{X}
	\label{a1}
	\end{align}
	where $S^{\dagger}$ is the Moore-Penrose pseudoinverse of $S$
	and $\mathscr{K}(S)=\{x|Sx=0\}$ is the null space of $S$, and $\oplus$ is the direct sum between two subspace. 
	
	\item we find the inverse image of $h_2(\cdot)$ for $x\in\mathscr{X}(r)$:
	\begin{align}
	\nonumber
	\mathscr{Z}(r)
	&\!\eq\!\left\{z|h_2(z) \in\mathscr{X}(r)\right\}\\
	&\!=\!\left\{[\pm \sqrt{x_0},\cdots,\pm \sqrt{x_{N-1}}]^T|x\in\mathscr{X}(r) \right\}.\label{a2}
	\end{align}
	
	\item we find the inverse image of $h_3(\cdot)$ for $z\in \mathscr{Z}(r)$:
	\begin{align}
	\mathscr{U}(r)
	\eq\left\{u|W^Tu\in\mathscr{Z}(r)\right\}=\left\{ Wz|z\in \mathscr{Z}(r)\right\} .\label{a3}
	\end{align}
\end{enumerate}

It is worth to note that $\mathscr{X}(r)$ is still a convex
polytope.

\begin{prop}\label{prop2}
	The inverse image
	$\mathscr{X}(r)$
	is a convex polytope
	and is determined by a group of halfspaces and hyperplanes:
	\begin{align}
	\mathscr{X}(r)=\left\{x|Sx = r, x_i\geq 0,~ i=0,\dots, N\!-\!1   \right\}.
	\end{align}
\end{prop}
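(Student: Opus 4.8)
The plan is to identify the set $\mathscr{X}(r)$, as built up through the chain \eqref{a1}, with the explicit halfspace/hyperplane description in the statement, and then to note that boundedness upgrades ``polyhedron'' to ``polytope''. Starting from \eqref{a1}, one has $\mathscr{X}(r)=\{x\mid Sx=r\}\cap\mathscr{X}$, where by \eqref{ih2} the set $\mathscr{X}=\{x\mid\sum_{i=0}^{N-1}x_i=\mathscr{E},\ x_i\ge 0\}$. The inclusion $\mathscr{X}(r)\subseteq\{x\mid Sx=r,\ x_i\ge 0,\ i=0,\dots,N-1\}$ is then immediate, so the whole content is the reverse inclusion, i.e.\ that the equality constraint $\sum_i x_i=\mathscr{E}$ is redundant once $Sx=r$ is imposed.

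For that I would invoke the explicit $S$ of Theorem \ref{thm2a}, whose first row is $\xi_0^T=[1,1,\dots,1]$; hence the first scalar equation contained in the vector equation $Sx=r$ reads exactly $\sum_{i=0}^{N-1}x_i=r_0$. Since $r\in\mathscr{F}$ and, by $L_0=I_N$, one has $f_0(u)=u^TL_0u=u^Tu=\mathscr{E}$ for every $u\in\mathscr{U}$, it follows that $r_0=\mathscr{E}$. Therefore any $x$ with $Sx=r$ and $x_i\ge 0$ automatically satisfies $\sum_i x_i=\mathscr{E}$, i.e.\ $x\in\mathscr{X}$, which yields the reverse inclusion and hence the claimed set equality. (In fact this argument shows the first row of \emph{any} admissible $S$ must be the all-ones vector, by testing $Sx=r$ on the vertices $\mathscr{E}e_i$ of $\mathscr{X}$, so the clean description is not an artefact of the particular choice in Theorem \ref{thm2a}.)

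For the geometric conclusion I would observe that $\{x\mid Sx=r,\ x_i\ge 0,\ i=0,\dots,N-1\}$ is the intersection of finitely many hyperplanes (the $n$ rows of $Sx=r$) with finitely many closed halfspaces ($x_i\ge 0$), hence a convex polyhedron; it is bounded because it is contained in the compact scaled simplex $\mathscr{X}$, and a bounded polyhedron is a polytope. This gives both assertions of the proposition.

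I expect the only genuine obstacle to be the redundancy observation in the second paragraph: recognizing that, modulo $Sx=r$, the simplex equality carries no extra information, which is precisely the step where the structure of $S$ from Theorem \ref{thm2a} (constant first row) and the normalization $r_0=\mathscr{E}$ of points of $\mathscr{F}$ enter. Everything else is routine polyhedral bookkeeping.
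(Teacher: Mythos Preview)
Your proposal is correct and supplies exactly the routine argument the paper has in mind: the paper itself omits the proof of this proposition, stating in Appendix~A that it is ``straightforward and thus omitted.'' Your two ingredients---that the first row of $S$ is the all-ones vector (Theorem~\ref{thm2a}) so that $Sx=r$ already encodes $\sum_i x_i=r_0=\mathscr{E}$, and that the resulting polyhedron is bounded because it sits inside the compact simplex $\mathscr{X}$---are precisely what makes the result immediate, so there is nothing to add.
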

This means that the optimal inputs $u^*=\mathscr{U}(r^*)$, where $u^*$ can be any of $u^*_{\rm
	D},u^*_{\rm A}$ and $u^*_{\rm E}$.

\begin{rem}
	Noting that $W$ is orthogonal and the row vectors of $S$ are part of the column vectors of $W$ subject to some scaling factors, a basis of $\mathscr{K}(S)$ can be derived as follows:
	\begin{itemize}
		\item when $N$ is even, a basis of $\mathscr{K}(S)$ is
		\begin{align*}
		\left\{
		\begin{array}{ll}
		\{\zeta_1,\cdots,\zeta_{\frac{N-2}{2}}\}&\mbox{if}~n\!-\!1\geq N/2,\\
		\{\xi_{n},\cdots,\xi_{\frac{N}{2}},\zeta_1,\cdots,\zeta_{\frac{N-2}{2}}\}
		&\mbox{if}~n\!-\!1< N/2.
		\end{array}
		\right.
		\end{align*}
		
		\item when $N$ is odd, a basis of $\mathscr{K}(S)$ is
		\begin{align*}
		\left\{
		\begin{array}{ll}
		\{\zeta_1,\cdots,\zeta_{\frac{N-1}{2}}\}&\mbox{if}~n\!-\!1\geq (N\!-\!1)/2,\\
		\{\xi_{n},\cdots,\xi_{\frac{N\!-\!1}{2}},\zeta_1,\cdots,\zeta_{\frac{N-2}{2}}\}
		&\mbox{if}~n\!-\!1< (N\!-\!1)/2.
		\end{array}
		\right.
		\end{align*}
	\end{itemize}
\end{rem}

\begin{rem}
	As can be seen from
	the algorithm given above for finding the optimal inputs $u^*$ from $r^*\in\mathscr{F}$, the key is to determine the set $\mathscr{X}(r^*)$ while steps 2) and 3) are straightforward.
	Actually, some elements belonging to $\mathscr{X}(r^*)$ can be derived from the global minima  of the optimization problems \dref{odr}--\dref{oer}.
	Assume that the global minima corresponding to the constraints \dref{fsc1} is $\{r^*,a^*\}$, respectively, where $a^*=[a_1^*,a_2^*,\cdots,a_{N/2+1}^*]^T$.
	Then all vectors
	$$\mathscr{E}\Big[a_1^*,\alpha_2a_2^*,\cdots,
	\alpha_{\frac N2}a_{\frac N2}^*,a_{\frac N2\!+\!1}^*,
	(1-\alpha_{\frac N2})a_{\frac N2}^*,\cdots,(1-\alpha_{2})a_{2}^*\Big]^T$$
	belong to $\mathscr{X}(r^*)$ for $0\leq \alpha_i\leq 1,i=2,\cdots,N/2$.
	Accordingly, for the constraint \dref{fsc2} with $a^*=[a_1^*,a_2^*,\cdots,a_{\frac{N+1}2}^*]^T$, we have all vectors
	$$\mathscr{E}\Big[a_1^*,\alpha_2a_2^*,\cdots,
	\alpha_{\frac {N+1}2}a_{\frac {N+1}2}^*,
	(1-\alpha_{\frac {N+1}2})a_{\frac {N+1}2}^*,\cdots,(1-\alpha_{2})a_{2}^*\Big]^T$$
	belong to $\mathscr{X}(r^*)$ for $0\leq \alpha_i\leq 1,i=2,\cdots,(N\!+\!1)/2$.
	At last, for the constraint \dref{fsc3} with $a^*=[a_1^*,a_2^*,\cdots,a_N^*]^T$, we have the vector $\mathscr{E}a^*$ belongs to $\mathscr{X}(r^*)$.
\end{rem}

\section{Optimal Input for Some Special Cases}
\label{sec4}

In general, there is no analytic solution to the input design
problems. In this section, we study the optimal input for
some special types of fixed kernels. The obtained analytic results provide insights on input design and
also its dependence on the kernel structure.



\subsection{Diagonal Kernel Matrices}
We first consider the ridge kernel matrix and then more general
diagonal kernel matrices.
\subsubsection{Ridge Kernel Matrix}
When the kernel matrix is
\begin{align}
P=c I_n,~\mbox{with}~c > 0, \label{rk}
\end{align}
it is possible to obtain the explicit expression of $r^*_{\rm D}$
and $r^*_{\rm A}$, which is given in the following proposition.
\begin{prop}
	\label{thm2}
	Consider the ridge kernel matrix \dref{rk}.
	Then we have
	\begin{align}
	r^*_{\rm D}=r^*_{\rm
		A}=[\mathscr{E},\underbrace{0,\cdots,0}_{n-1~\mbox{\rm zeros}}]^T\eq r^\dagger.\label{dopc}
	\end{align}
\end{prop}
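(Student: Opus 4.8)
The plan is to exploit the special structure of $Q(r)$ for the ridge kernel: with $P=cI_n$ one has $\sigma^2P^{-1}=\mu I_n$ where $\mu:=\sigma^2/c>0$, so $Q(r)=R(r)+\mu I_n$ with $R(r)=\Phi^T\Phi$ the positive semidefinite Toeplitz matrix in (\ref{phiphi}). The key observation is that $\mathrm{Tr}(R(r))$ is \emph{constant} on the feasible set $\mathscr{F}$: every $r\in\mathscr{F}$ has first entry $r_0=\sum_{k=0}^{N-1}u_k^2=u^Tu=\mathscr{E}$, hence $\mathrm{Tr}(R(r))=n r_0=n\mathscr{E}$. Writing $\lambda_1(r),\dots,\lambda_n(r)\ge 0$ for the eigenvalues of $R(r)$, the eigenvalues of $Q(r)$ are $\lambda_i(r)+\mu$ and they are subject to the fixed budget $\sum_{i=1}^n\lambda_i(r)=n\mathscr{E}$.

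With this reduction the two criteria become scalar symmetric functions of $\lambda_1(r),\dots,\lambda_n(r)$ on a simplex, and Jensen's inequality finishes the job. By concavity of $\log$,
\begin{align*}
\log\det(\sigma^2 Q(r)^{-1}) = n\log\sigma^2 - \sum_{i=1}^n\log(\lambda_i(r)+\mu) \ge n\log\sigma^2 - n\log(\mathscr{E}+\mu),
\end{align*}
and by convexity of $t\mapsto(t+\mu)^{-1}$ on $[0,\infty)$,
\begin{align*}
\mathrm{Tr}(\sigma^2 Q(r)^{-1}) = \sigma^2\sum_{i=1}^n\frac{1}{\lambda_i(r)+\mu} \ge \frac{n\sigma^2}{\mathscr{E}+\mu}.
\end{align*}
In both cases strict concavity/convexity gives equality if and only if $\lambda_1(r)=\cdots=\lambda_n(r)=\mathscr{E}$, i.e. $R(r)=\mathscr{E}I_n$; since the entries of $R(\cdot)$ in (\ref{phiphi}) are exactly the components of $r$, this forces $r=r^\dagger=[\mathscr{E},0,\dots,0]^T$.

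It then remains to verify that the equalizer $r^\dagger$ is feasible. Taking the impulsive input $u=[\sqrt{\mathscr{E}},0,\dots,0]^T\in\mathscr{U}$ (note $u^Tu=\mathscr{E}$) and recalling the assumption (\ref{eq:assoninicond}) (so $u_{-1}=u_{N-1}=0$, etc.), the definition (\ref{rd}) gives $r_0=\mathscr{E}$ and $r_j=0$ for $j=1,\dots,n-1$, i.e. $f(u)=r^\dagger$, hence $r^\dagger\in\mathscr{F}$. Thus the lower bounds above are attained at $r^\dagger$, so $r^\dagger$ minimizes both (\ref{odr}) and (\ref{oar}) over $\mathscr{F}$; since by Proposition \ref{prop1} these minimizers are unique, $r^*_{\rm D}=r^*_{\rm A}=r^\dagger$.

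The argument is short once the structural reduction is made; the only points that need care are the chain of equivalences in the equality case of Jensen's inequality (it is the constancy of $\mathrm{Tr}(R(r))$ on $\mathscr{F}$ that turns the problem into equidistribution of a fixed eigenvalue sum, and the Toeplitz structure that makes $R(r)=\mathscr{E}I_n$ equivalent to $r=r^\dagger$) and the elementary check that $r^\dagger$ actually lies in the polytope $\mathscr{F}$.
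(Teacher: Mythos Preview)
Your proof is correct, but it takes a genuinely different route from the paper's. The paper argues via first-order optimality: it shows $r^\dagger\in\mathscr{F}$ (using the $S,\mathscr{X}$ structure of Theorem~\ref{thm2a}), then observes that $Q(r^\dagger)=(\mathscr{E}+\sigma^2/c)I_n$ is scalar, so the gradient components $-\mathrm{Tr}(Q(r^\dagger)^{-1}Q_i)$ and $-\sigma^2\mathrm{Tr}(Q(r^\dagger)^{-2}Q_i)$ all vanish (the $Q_i$ have zero diagonal), and invokes the strict convexity from Proposition~\ref{prop1} to conclude that this stationary point is the unique minimizer. You instead exploit the fixed trace $\mathrm{Tr}\,R(r)=n\mathscr{E}$ on $\mathscr{F}$ and use Jensen's inequality on the eigenvalues of $R(r)$ to obtain a global lower bound that is achieved precisely at $r^\dagger$; feasibility is checked via the impulsive input rather than the $S$-matrix machinery. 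Your argument is more elementary and self-contained (uniqueness already follows from strict Jensen, so the appeal to Proposition~\ref{prop1} at the end is unnecessary). The paper's argument, however, extends verbatim to general diagonal kernels (Theorem~\ref{thm3}), since $Q(r^\dagger)$ remains diagonal and the same gradient computation goes through; your eigenvalue/Jensen route does not extend so readily, because for a non-scalar diagonal $\sigma^2P^{-1}$ the eigenvalues of $Q(r)=R(r)+\sigma^2P^{-1}$ are no longer simple shifts of those of $R(r)$.
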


Proposition \ref{thm2} shows that for ridge kernel matrix (\ref{rk}), any input such that $\Phi^T\Phi=\mathscr{E}I_n$ is the global minimum of \eqref{dop_o} and \eqref{aop_o}.
The optimal input given in Proposition \ref{thm2} is identical with the classic result of input design for FIR model estimation with ML/PEM given in \cite{Goodwin1977}.
In fact, this result is not surprising since the ridge kernel assumes that
the prior distribution of the true impulse response is independent and identically distributed Gaussian, which does not provide any information on the correlation and the decay rate of the true impulse response.
The optimal solution \dref{dopc} implies that the optimal inputs could be:
\begin{enumerate}[1)]
	\item Impulsive inputs: $[\pm\sqrt{\mathscr{E}},0,\cdots,0]^T$;
	\item White noise inputs as $N\rightarrow \infty$.
\end{enumerate}
Note that the impulsive input is globally optimal since it satisfies \dref{dopc} exactly. In contrast, it was only shown in \cite{Fujimoto2016} that the impulsive input is \emph{locally} optimal.
In addition, as $N\rightarrow\infty$, the white noise input is also globally optimal. Clearly, the white noise input is \emph{approximately} globally optimal for large $N$.

\subsubsection{General Diagonal Kernel Matrices}
When the kernel matrix is
\begin{align}
P=\diag([\lambda_1,\lambda_2,\cdots,\lambda_n])\label{dk}
\end{align}
where $\lambda_i>0$, $i=1,\dots,n$, and $\diag(\cdot)$ represents a diagonal matrix, the optimal $r^*_{\rm D}$ and
$r^*_{\rm A}$ are same as the one in Proposition \ref{thm2}.
\begin{thm}
	\label{thm3}
	Consider the diagonal kernel matrix \dref{dk}. Then
	we have
	\begin{align}
	r^*_{\rm D}=r^*_{\rm
		A}=r^\dagger.\label{dopcdk}
	\end{align}
\end{thm}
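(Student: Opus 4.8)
The plan is to show directly that $r^\dagger=[\mathscr E,0,\cdots,0]^T$ from \dref{dopc} is feasible and stationary for both transformed problems \dref{odr} and \dref{oar}, and then to upgrade stationarity to unique global optimality by invoking the strict convexity established in Proposition \ref{prop1}. The argument is the natural generalization of the proof of Proposition \ref{thm2}, to which it reduces when $\lambda_1=\cdots=\lambda_n$.

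First I would check that $r^\dagger\in\mathscr F$. By the definition \dref{fi} of $\mathscr F$ it suffices to exhibit one $u\in\mathscr U$ with $f(u)=r^\dagger$, and the impulsive input $u=[\sqrt{\mathscr E},0,\cdots,0]^T$ works: under the assumption \dref{eq:assoninicond} one has $u_{-i}=u_{N-i}=0$ for $i=1,\dots,n-1$ (here $N\geq n$ ensures $1\leq N-i\leq N-1$), so in \dref{phiphi} every off-diagonal coefficient $r_j=\sum_{k=0}^{N-1}u_ku_{k-j}$ with $j\geq1$ vanishes while $r_0=u_0^2=\mathscr E$; equivalently $\Phi^T\Phi=\mathscr E I_n$. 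Hence, by \dref{eq:Q}, $Q(r^\dagger)=\mathscr E I_n+\sigma^2 P^{-1}=\diag([\mathscr E+\sigma^2/\lambda_1,\cdots,\mathscr E+\sigma^2/\lambda_n])$, which I denote by $D$; it is a positive definite \emph{diagonal} matrix since all $\lambda_i>0$.

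The crux is the observation that each $Q_i$, $i=1,\dots,n-1$, has zero diagonal, so $\tr(D^{-1}Q_i)=0$ and $\tr(D^{-2}Q_i)=0$ because $D^{-1}$ and $D^{-2}$ are diagonal. Substituting $r=r^\dagger$ into the gradient formulas \dref{gvd} and \dref{gva} then gives $\partial\log\det(\sigma^2Q(r)^{-1})/\partial r_i=0$ and $\partial\tr(\sigma^2Q(r)^{-1})/\partial r_i=0$ at $r=r^\dagger$ for every $i=1,\dots,n-1$. Since $r_0=\mathscr E$ for all $r\in\mathscr F$, each feasible direction $r-r^\dagger$ has vanishing first component, so the directional derivative of either objective at $r^\dagger$ along $r-r^\dagger$ equals $\sum_{i=1}^{n-1}(\partial/\partial r_i)(\cdot)|_{r^\dagger}\,(r_i-0)=0$, i.e., the first-order optimality condition holds at $r^\dagger$. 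As the objectives of \dref{odr} and \dref{oar} are strictly convex on the convex set $\mathscr F$ by Proposition \ref{prop1}, $r^\dagger$ is the unique global minimizer, which yields $r^*_{\rm D}=r^*_{\rm A}=r^\dagger$.

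I expect the only subtle point to be the bookkeeping about which components of the gradient actually matter: $\partial(\cdot)/\partial r_0$ of either objective is nonzero at $r^\dagger$, but this is irrelevant since $r_0$ is pinned to $\mathscr E$ on $\mathscr F$. A route that bypasses this point (and Proposition \ref{prop1} altogether) is available: since $Q(r)=\Phi^T\Phi+\sigma^2P^{-1}\succ0$ has fixed diagonal $Q(r)_{ii}=\mathscr E+\sigma^2/\lambda_i$, Hadamard's inequality gives $\det Q(r)\leq\prod_{i=1}^n(\mathscr E+\sigma^2/\lambda_i)=\det D$, hence $\det(\sigma^2Q(r)^{-1})\geq\det(\sigma^2Q(r^\dagger)^{-1})$, while the elementary bound $(Q(r)^{-1})_{ii}\geq1/Q(r)_{ii}$ (Cauchy--Schwarz) gives $\tr(\sigma^2Q(r)^{-1})\geq\sigma^2\sum_{i=1}^n 1/(\mathscr E+\sigma^2/\lambda_i)=\tr(\sigma^2Q(r^\dagger)^{-1})$; in either inequality, equality forces $Q(r)$ to be diagonal, i.e., $r_1=\cdots=r_{n-1}=0$, giving optimality and uniqueness simultaneously.
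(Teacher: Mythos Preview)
Your primary argument---showing $r^\dagger\in\mathscr F$, observing that the gradient components \dref{gvd} and \dref{gva} vanish at $r^\dagger$ because $Q(r^\dagger)$ is diagonal while each $Q_i$ has zero diagonal, and then invoking the strict convexity of Proposition~\ref{prop1}---is exactly the paper's proof: the paper simply notes that the argument for Proposition~\ref{thm2} carries over verbatim since $Q(r^\dagger)$ remains diagonal for any diagonal $P$. Your feasibility check via the impulsive input is a harmless variant of the paper's choice $x^0=\mathscr E[1,\dots,1]^T/N\in\mathscr X$ with $Sx^0=r^\dagger$.

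The Hadamard/Cauchy--Schwarz alternative you sketch at the end is not in the paper and is a genuinely different route: it exploits directly that all $Q(r)$ with $r\in\mathscr F$ share the same diagonal $\mathscr E+\sigma^2/\lambda_i$, so Hadamard's inequality and the bound $(Q^{-1})_{ii}\geq 1/Q_{ii}$ pin the minimum of both objectives to the diagonal case $r=r^\dagger$, yielding optimality \emph{and} uniqueness in one stroke without invoking Proposition~\ref{prop1}. This is more elementary and self-contained, though it is tailored to diagonal $P$ and does not extend to the nondiagonal analysis in Section~\ref{sec4}, whereas the gradient-based argument of the paper sets up the machinery (Lemma~\ref{lemb2}) reused there.
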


Note that the diagonal kernel matrix \dref{dk} contains the DI
kernel matrix studied in \cite{Chen2012} as a special case, which
takes the following form
$$P=\diag(c[\lambda,\lambda^2,\cdots,\lambda^n]),\ c,\lambda>0.$$
Theorem \ref{thm3} shows that for general diagonal kernel matrices
the information on the decay rate of the impulse response is not
useful for the input design, if no information on the correlation of
the true impulse response is provided.

\subsection{Nondiagonal Kernel Matrices}

Theorem \ref{thm3} motivates an interesting question that, for
nondiagonal kernel matrices, whether $r^*_{\rm D}$ and $r^*_{\rm A}$
are still equal to $r^\dagger$ or not.

\subsubsection{Kernel matrix with a tridiagonal inverse}

To answer this question, we first consider a class of kernel matrices $P$ such that it is
nonsingular and its inverse is tridiagonal, i.e., $P^{-1}$ can be
written as follows:
\begin{align}
P^{-1}=\left[
\begin{array}{ccccc}
p_1&-e_2&&&\\
-e_2&p_2&-e_3&&\\
&\ddots&\ddots&\ddots&\\
&&-e_{n-1}&p_{n-1}&-e_n\\
&&&-e_n&p_n
\end{array}
\right]\label{kti}
\end{align}
where $p_i>0$ for $1\leq i\leq n$. Then, we have the following result.
\begin{thm}
	\label{thm6}
	Consider a kernel matrix $P$ which has tridiagonal inverse \dref{kti}.
	Then we have
	\begin{align}
	r^*_{\rm D}\neq r^\dagger~\text{and }~
	r^*_{\rm A}\neq r^\dagger\label{ndki}
	\end{align}
	for the following two cases:
	\begin{enumerate}[1)]
		\item $e_j>0$ for $j=2,\dots, n$ with any $N$;
		\item $e_j<0$ for $j=2,\dots, n$ with  even $N$.
	\end{enumerate}
\end{thm}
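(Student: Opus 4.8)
The plan is to show that the ``impulsive'' point $r^\dagger=[\mathscr{E},0,\dots,0]^T$ cannot minimize \dref{odr} or \dref{oar}, by producing at $r^\dagger$ an explicit feasible direction along which the (smooth and, by Proposition \ref{prop1}, strictly convex) criterion strictly decreases. Recall first that $r^\dagger\in\mathscr{F}$: it equals $f(u)$ for the impulsive input $u=[\sqrt{\mathscr{E}},0,\dots,0]^T\in\mathscr{U}$. Hence, for any generator $v$ of $\mathscr{F}$ appearing in \dref{conhe} or \dref{conho}, the segment $[r^\dagger,v]$ lies in $\mathscr{F}$ by convexity, and it suffices to exhibit one such $v$ with $\nabla\phi(r^\dagger)^T(v-r^\dagger)<0$, where $\phi$ denotes either the $D$-criterion $\log\det(\sigma^2Q(r)^{-1})$ or the $A$-criterion $\tr(\sigma^2Q(r)^{-1})$; this forces $\phi$ to drop below $\phi(r^\dagger)$ along $[r^\dagger,v]$, so $r^\dagger$ is not the minimizer.

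The first concrete step is to evaluate the data at $r^\dagger$. From $Q(r)=r_0I_n+r_1Q_1+\cdots+r_{n-1}Q_{n-1}+\sigma^2P^{-1}$ together with \dref{kti}, the matrix $Q(r^\dagger)=\mathscr{E}I_n+\sigma^2P^{-1}$ is symmetric and tridiagonal, positive definite since $P\succ0$, with diagonal entries $\mathscr{E}+\sigma^2p_l$ and off-diagonal entries $-\sigma^2e_{l+1}$. By the gradient formulas \dref{gvd}--\dref{gva} and the symmetry of $Q(r^\dagger)^{-1}$, for $i=1,\dots,n-1$ the $i$-th component of $\nabla\phi(r^\dagger)$ equals $g_i^{\rm D}=-2\sum_{l=1}^{n-i}(Q(r^\dagger)^{-1})_{l,l+i}$ for the $D$-criterion and $g_i^{\rm A}=-2\sigma^2\sum_{l=1}^{n-i}(Q(r^\dagger)^{-2})_{l,l+i}$ for the $A$-criterion, while the $r_0$-component is irrelevant because every generator $v$ we shall use satisfies $v_0=\mathscr{E}=r_0^\dagger$. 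Thus the whole argument reduces to determining the signs of the super-diagonal entries of $Q(r^\dagger)^{-1}$ and $Q(r^\dagger)^{-2}$.

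This is where the two cases split, via the theory of (irreducible) Stieltjes matrices. In \textbf{Case 1} ($e_j>0$), $Q(r^\dagger)$ is symmetric positive definite with strictly negative off-diagonal entries, hence an irreducible Stieltjes matrix, so $Q(r^\dagger)^{-1}$ --- and therefore $Q(r^\dagger)^{-2}=\big(Q(r^\dagger)^{-1}\big)^2$ --- has all entries strictly positive; consequently $g_i^{\rm D}<0$ and $g_i^{\rm A}<0$ for every $i$. Choosing $v=\mathscr{E}\,\xi_0(1\!:\!n)=[\mathscr{E},\dots,\mathscr{E}]^T$, the first generator in \dref{conhe}/\dref{conho}, we have $v-r^\dagger=\mathscr{E}[0,1,\dots,1]^T$ and $\nabla\phi(r^\dagger)^T(v-r^\dagger)=\mathscr{E}\sum_{i=1}^{n-1}g_i<0$, with no restriction on $N$; intuitively, a positively correlated (smoother) input improves on the impulse. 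In \textbf{Case 2} ($e_j<0$ and $N$ even), I would conjugate by the signature matrix $D=\diag\big((-1)^1,\dots,(-1)^n\big)$: then $DQ(r^\dagger)D$ is symmetric positive definite with off-diagonal entries $\sigma^2e_{l+1}<0$, again an irreducible Stieltjes matrix, so $(DQ(r^\dagger)D)^{-1}$ and its square are entrywise strictly positive; since $Q(r^\dagger)^{-1}=D(DQ(r^\dagger)D)^{-1}D$ and $Q(r^\dagger)^{-2}=D(DQ(r^\dagger)D)^{-2}D$, the $(l,m)$-entry of $Q(r^\dagger)^{-1}$ and of $Q(r^\dagger)^{-2}$ has sign $(-1)^{l+m}$, hence $g_i^{\rm D}$ and $g_i^{\rm A}$ have sign $(-1)^{i+1}$. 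Choosing now $v=\mathscr{E}\,\xi_{N/2}(1\!:\!n)=\mathscr{E}[1,-1,1,\dots,(-1)^{n-1}]^T$ --- a generator of $\mathscr{F}$ in \dref{conhe} precisely because $N$ is even --- we get $v-r^\dagger=\mathscr{E}[0,-1,1,\dots]^T$, whose $i$-th entry is $\mathscr{E}(-1)^i$, so $\nabla\phi(r^\dagger)^T(v-r^\dagger)=\mathscr{E}\sum_{i=1}^{n-1}(-1)^ig_i<0$, each summand $(-1)^ig_i$ being strictly negative.

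The main obstacle is the sign analysis of $Q(r^\dagger)^{-1}$ (and $Q(r^\dagger)^{-2}$): it rests on the irreducible-Stieltjes / inverse-$M$-matrix property, plus the signature trick that transfers Case 2 to the Stieltjes situation. A secondary subtlety is that $\mathscr{F}$ may be low-dimensional when $N$ is small, so one cannot argue merely that ``$r^\dagger$ is an interior point with non-vanishing gradient''; working with the concrete generator $\mathscr{E}\,\xi_0(1\!:\!n)$ (resp.\ $\mathscr{E}\,\xi_{N/2}(1\!:\!n)$) of $\mathscr{F}$ bypasses this, and it also explains the hypothesis ``$N$ even'' in Case 2: the fully alternating vector $\mathscr{E}\,\xi_{N/2}(1\!:\!n)$ belongs to the generating set \dref{conhe} only for even $N$, whereas for odd $N$ none of the generators $\mathscr{E}\,\xi_j(1\!:\!n)$ has a sign pattern aligned with $\nabla\phi(r^\dagger)$.
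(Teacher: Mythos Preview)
Your proposal is correct and follows essentially the same route as the paper: establish the sign pattern of the entries of $Q(r^\dagger)^{-1}$ (and of its square), then test the gradient against the generator $\mathscr{E}\,\xi_0(1\!:\!n)$ in Case~1 and $\mathscr{E}\,\xi_{N/2}(1\!:\!n)$ in Case~2 to exhibit a strict feasible descent direction. The only cosmetic differences are that the paper obtains the sign pattern from Meurant's tridiagonal-inverse formulas (Theorem~\ref{thma3}) rather than the Stieltjes/$M$-matrix framework with your signature-conjugation trick, and packages the optimality test via Lemma~\ref{lemb2} instead of the direct strict-convexity/descent argument you use.
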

One may wonder whether there exist kernel matrices such that its
inverse is tridiagonal. The answer is positive. Actually, the DC
kernel (\ref{kdc}) has tridiagonal inverse \cite{Carli2017}, and
moreover, the simulation induced kernels with Markov property of
order 1 also have tridiagonal inverse \cite[Section 4.5]{Chen2017}.
In particular, the inverse of the DC kernel matrix is \begin{align}\label{eq:DCinv}
P_{\rm DC}^{-1}=\frac{1}{c(1-\rho^2)}
\left[
\begin{array}{ccccccc}
\frac 1{\lambda} & -\frac{\rho}{\lambda^{\frac32}} &\cdots  &0 &0\\
-\frac{\rho}{\lambda^{\frac32}} & \frac{1+\rho^2}{\lambda^{2}} &\ddots  &0 &0\\
\vdots& \ddots & \ddots & \ddots & \vdots \\
0&0  &\ddots  &\frac{1+\rho^2}{\lambda^{n-1}} & -\frac{\rho}{\lambda^{\frac{2n-1}2}} \\
0& 0 & \cdots & -\frac{\rho}{\lambda^{\frac{2n-1}2}} & \frac1{\lambda^{n}}
\end{array}
\right]
\end{align}
where $c>0$, $0<\lambda\leq 1$, and $|\rho|<1$.
Thus we have the following result on $r^*_{\rm D}$ and
$r^*_{\rm A}$ for the DC kernel.
\begin{cor}
	\label{prop6}
	Consider the DC kernel \dref{kdc}. Then for $\rho>0$ with any $N$, or for $\rho<0$ with even
	$N$,   we have $r^*_{\rm D}\neq r^\dagger~\mbox{and}~
	r^*_{\rm A}\neq r^\dagger$. In particular,
	for the TC kernel \dref{ktc}, $r^*_{\rm D}\neq r^\dagger~\mbox{and}~
	r^*_{\rm A}\neq r^\dagger$ for any $N$. 
\end{cor}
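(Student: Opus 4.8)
The plan is to obtain Corollary~\ref{prop6} as a direct consequence of Theorem~\ref{thm6} applied to the explicit inverse \eqref{eq:DCinv} of the DC kernel matrix. For $c>0$, $0<\lambda\le 1$ and $|\rho|<1$ the matrix $P_{\rm DC}(\eta)$ is nonsingular and $P_{\rm DC}^{-1}$ is exactly the tridiagonal matrix shown in \eqref{eq:DCinv}, which already has the shape \dref{kti} required by Theorem~\ref{thm6}; hence the whole argument reduces to checking the sign patterns of the diagonal and off-diagonal entries and then invoking that theorem.

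Reading off from \eqref{eq:DCinv}, the diagonal entries are $p_1=\frac{1}{c(1-\rho^2)\lambda}$, $p_i=\frac{1+\rho^2}{c(1-\rho^2)\lambda^{i}}$ for $2\le i\le n-1$, and $p_n=\frac{1}{c(1-\rho^2)\lambda^{n}}$, all strictly positive since $c>0$, $\lambda>0$ and $1-\rho^2>0$. The off-diagonal entries are $-e_j$ with $e_j=\frac{\rho}{c(1-\rho^2)\lambda^{\frac{2j-1}{2}}}$ for $j=2,\dots,n$. Because the prefactor $\frac{1}{c(1-\rho^2)\lambda^{\frac{2j-1}{2}}}$ is positive for every $j$, we get $\operatorname{sign}(e_j)=\operatorname{sign}(\rho)$ uniformly in $j$; in particular all the $e_j$ share a common sign. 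Thus, if $\rho>0$ then $e_j>0$ for all $j$ and case 1) of Theorem~\ref{thm6} applies for any $N$, giving $r^*_{\rm D}\neq r^\dagger$ and $r^*_{\rm A}\neq r^\dagger$; if $\rho<0$ then $e_j<0$ for all $j$ and case 2) of Theorem~\ref{thm6} applies for even $N$, with the same conclusion. (The excluded value $\rho=0$ is consistent: there $P_{\rm DC}$ is diagonal and Theorem~\ref{thm3} gives $r^*_{\rm D}=r^*_{\rm A}=r^\dagger$.) For the TC kernel \dref{ktc}, recall it is the DC kernel with $\rho=\sqrt{\lambda}$; since $0<\lambda\le 1$ we have $\rho=\sqrt{\lambda}>0$, so the first case applies and $r^*_{\rm D}\neq r^\dagger$, $r^*_{\rm A}\neq r^\dagger$ for any $N$.

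There is no real obstacle beyond bookkeeping; the only point that deserves care is confirming that the sub-/super-diagonal of $P_{\rm DC}^{-1}$ keeps a constant sign along the diagonal (which is what Theorem~\ref{thm6} needs), and this is immediate because the $\lambda$-power prefactors are positive, so the sign is controlled entirely by $\rho$. One should also note that the degenerate boundary cases $\lambda=1$ or $|\rho|=1$ fall outside the standing hypotheses (they make $P_{\rm DC}$ singular, or $1-\rho^2=0$), so the inverse formula \eqref{eq:DCinv} and hence Theorem~\ref{thm6} apply throughout the claimed parameter range.
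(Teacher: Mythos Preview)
Your proof is correct and follows exactly the paper's approach: the paper's own proof simply says the result is trivial by noting \eqref{eq:DCinv} and Theorem~\ref{thm6}, and you have carried out precisely that verification, checking that the tridiagonal inverse \eqref{eq:DCinv} satisfies $p_i>0$ and $\operatorname{sign}(e_j)=\operatorname{sign}(\rho)$ so that the two cases of Theorem~\ref{thm6} apply. One minor inaccuracy in your closing remark: $\lambda=1$ does not by itself make $P_{\rm DC}$ singular (it is the case $|\rho|=1$ that breaks \eqref{eq:DCinv}); your caveat is relevant only for the TC specialization, where $\lambda=1$ forces $\rho=\sqrt{\lambda}=1$.
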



\subsubsection{More general nondiagonal kernel matrices}

Now we consider more general nondiagonal kernel matrices and we
start from a simple case when $n=2$.
\begin{prop}
	\label{prop5}
	Consider the kernel matrix
	\begin{align}
	P^{-1}= \left[
	\begin{array}{cc}
	p_1 & -e\\
	-e &p_2
	\end{array}
	\right]\label{tdk}
	\end{align}
	where $p_1>0,p_2>0$, and $e\neq 0$. Then we have
	$r^*_{\rm D}\neq r^\dagger~\mbox{and}~
	r^*_{\rm A}\neq r^\dagger$ for any $N$.
\end{prop}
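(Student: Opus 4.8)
The plan is to treat Proposition~\ref{prop5} as a one-dimensional optimisation in the single free coordinate $r_1$, since $n=2$ here and, by Theorem~\ref{thm2a}, every $r\in\mathscr{F}$ has $r_0=\mathscr{E}$. First I would write out $Q(r)$ explicitly: with $n=2$ the only auxiliary matrix $Q_1$ has ones in its two off-diagonal entries, so
\[
Q(r)=\begin{bmatrix}\mathscr{E}+\sigma^2 p_1 & r_1-\sigma^2 e\\ r_1-\sigma^2 e & \mathscr{E}+\sigma^2 p_2\end{bmatrix},
\]
whence $\det Q(r)=(\mathscr{E}+\sigma^2p_1)(\mathscr{E}+\sigma^2p_2)-(r_1-\sigma^2 e)^2$, while $\tr Q(r)=2\mathscr{E}+\sigma^2(p_1+p_2)$ does not depend on $r_1$.

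Next I would invoke the $2\times2$ identity $\tr(Q^{-1})=\tr(Q)/\det(Q)$ to conclude that, along $\mathscr{F}$, both the $D$- and the $A$-criterion are constants times $1/\det Q(r)$; hence $r^*_{\rm D}=r^*_{\rm A}$, and both are obtained by maximising $\det Q(r)$, i.e.\ by minimising $g(r_1):=(r_1-\sigma^2 e)^2$ over the admissible range of $r_1$. That range I would read off from Theorem~\ref{thm2a}: $\mathscr{F}$ is the convex hull of the points $\mathscr{E}\,\xi_j(1\!:\!2)=\mathscr{E}\,[1,\cos(j\varpi)]^T$, $\varpi=2\pi/N$, with $j$ running up to $N/2$ (even $N$) or $(N-1)/2$ (odd $N$); since all of these share first coordinate $\mathscr{E}$, the admissible values of $r_1$ fill the interval $[-\mathscr{E},\mathscr{E}]$ when $N$ is even and $[-\mathscr{E}\cos(\pi/N),\mathscr{E}]$ when $N$ is odd (using that $\cos$ is decreasing on $[0,\pi]$). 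In particular $r_1=0$ is an \emph{interior} point of this interval for every $N\ge 2$, so $r^\dagger=[\mathscr{E},0]^T$ lies in $\mathscr{F}$ and the claim is not vacuous.

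The conclusion then follows from a first-order argument: $g$ is a strictly convex parabola with $g'(0)=-2\sigma^2 e\neq 0$, and since $0$ is interior to the admissible interval, $g(0)$ cannot be the minimum — displacing $r_1$ slightly toward $\sigma^2 e$ strictly decreases $g$. Hence the minimiser $r_1^*$ is nonzero (it equals $\sigma^2 e$ if that value is admissible, and otherwise the admissible endpoint nearer to $\sigma^2 e$), and therefore $r^*_{\rm D}=r^*_{\rm A}=[\mathscr{E},r_1^*]^T\neq r^\dagger$ for any $N$, which is the assertion. Nothing in this scheme is genuinely difficult; the one step that needs a little care is pinning down the admissible range of $r_1$ for both parities of $N$ and checking that $0$ is strictly interior. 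If one wishes to bypass that, the same conclusion is reached by observing that the gradient component $-\tr\!\bigl(Q(r^\dagger)^{-1}Q_1\bigr)$ of the $D$-criterion at $r^\dagger$ is nonzero because $Q(r^\dagger)$ is not diagonal, so $r^\dagger$ is not even a stationary point.
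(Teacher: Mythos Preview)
Your argument is correct and, at bottom, rests on the same computation as the paper's: both show that the derivative of the criterion with respect to $r_1$ at $r^\dagger$ equals (a positive multiple of) $-2\sigma^2 e/\det Q(r^\dagger)\neq 0$, and that the feasible set admits displacements of $r_1$ in both directions. The paper packages this via Lemma~\ref{lemb2}, checking that $-e\cos(j\varpi)<0$ for some vertex index $j$; you package it by identifying the admissible $r_1$-interval explicitly and noting that $0$ is interior, so a nonzero derivative rules out optimality. Your route is a bit more elementary and has the pleasant by-product, through the $2\times2$ identity $\tr(Q^{-1})=\tr(Q)/\det(Q)$ with $\tr Q$ constant on $\mathscr{F}$, of showing that the $D$- and $A$-criteria share the \emph{same} minimiser here, not merely that neither is $r^\dagger$; the paper simply says the $A$-case is ``similar.'' Conversely, the paper's use of Lemma~\ref{lemb2} is what generalises cleanly to $n>2$, whereas your one-dimensional reduction is special to $n=2$.
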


Then we consider nondiagonal kernel matrices with $n\geq 2$ and $n\in\mathbb N$.

\begin{thm}
	\label{thm5}
	Suppose the kernel matrix $P$ is positive definite.
	When $N\geq 2n\!-\!2$, we have
	$$r^*_{\rm D}\neq r^\dagger$$ if at least one of the values $\{{\rm Tr} (Q(r^\dagger)^{-1}Q_i),1 \! \leq i \! \leq n-1\}$ is nonzero
	and
	\begin{align*}
	r^*_{\rm A}\neq r^\dagger
	\end{align*}
	if at least one of the values $\{{\rm Tr} (Q(r^\dagger)^{-2}Q_i),1 \! \leq i \!\leq n-1\}$ is nonzero.
\end{thm}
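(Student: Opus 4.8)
The plan is to show that, under the hypothesis $N\geq 2n-2$, the point $r^\dagger=[\mathscr{E},0,\dots,0]^T$ lies in the \emph{relative interior} of the feasible polytope $\mathscr{F}$; since the objectives in \dref{odr} and \dref{oar} are smooth near $r^\dagger$, optimality of $r^\dagger$ would then force the gradient restricted to the affine hull of $\mathscr{F}$ to vanish there, and the trace hypotheses are exactly the negation of this. First I would note that $r^\dagger$ is feasible: the impulsive input $u=\sqrt{\mathscr{E}}\,e_1\in\mathscr{U}$ (with $e_1=[1,0,\dots,0]^T$) gives, by \dref{rd}, $r_0=\mathscr{E}$ and $r_j=0$ for $j=1,\dots,n-1$, so $r^\dagger=f(u)\in\mathscr{F}$. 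Moreover $Q(r^\dagger)=\mathscr{E}I_n+\sigma^2P^{-1}\succ 0$, so $\log\det(\sigma^2Q(r)^{-1})$ and ${\rm Tr}(\sigma^2Q(r)^{-1})$ are $C^\infty$ on a neighborhood of $r^\dagger$, and by Proposition \ref{prop1} each of \dref{odr}, \dref{oar} has a unique minimizer on $\mathscr{F}$. Thus it suffices to prove that $r^\dagger$ fails the first-order condition whenever the relevant trace is nonzero.

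The central step is the claim $r^\dagger\in\mathrm{relint}\,\mathscr{F}$ when $N\geq 2n-2$. By Theorem \ref{thm2a}, $\rank(S)=\min(N/2+1,n)$ for even $N$ and $\min((N+1)/2,n)$ for odd $N$; the inequality $N\geq 2n-2$ is precisely what makes $\rank(S)=n$ in both parities (for odd $N$ it forces $N\geq 2n-1$). Since $\mathscr{F}\subseteq\{r\in\mathbb R^n:r_0=\mathscr{E}\}$ and $\rank(S)=n$, the polytope $\mathscr{F}$ is full-dimensional in this hyperplane, i.e.\ $\dim\mathscr{F}=n-1$, with vertices $\mathscr{E}\xi_0(1\!:\!n),\dots,\mathscr{E}\xi_{\lfloor N/2\rfloor}(1\!:\!n)$ by \dref{conhe}--\dref{conho}. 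Next I would trace the impulsive input through the decomposition \dref{m1}--\dref{m4}: with $z=W^Tu=\sqrt{\mathscr{E}}\,(W^Te_1)$ and $x=h_2(z)$, the explicit form of $W$ in Theorem \ref{thm2a} yields $x=\mathscr{E}\,[1/N,\,2/N,\dots,2/N,\,1/N,\,0,\dots,0]^T$ for even $N$ (and the analogous $x=\mathscr{E}\,[1/N,\,2/N,\dots,2/N,\,0,\dots,0]^T$ for odd $N$), which is supported exactly on the indices $\{0,1,\dots,\lfloor N/2\rfloor\}$ with strictly positive entries there. Since the corresponding columns of $S$ are exactly $\xi_0(1\!:\!n),\dots,\xi_{\lfloor N/2\rfloor}(1\!:\!n)$, reading off $r^\dagger=Sx$ gives $r^\dagger=\mathscr{E}\sum_{j=0}^{\lfloor N/2\rfloor}a_j\xi_j(1\!:\!n)$ with all $a_j>0$ and $\sum_j a_j=1$; that is, $r^\dagger$ is a strictly positive convex combination of \emph{all} vertices of the $(n-1)$-dimensional polytope $\mathscr{F}$, hence lies in its relative interior.

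Finally, because $r^\dagger\in\mathrm{relint}\,\mathscr{F}$ and $\mathscr{F}$ lies in the hyperplane $r_0=\mathscr{E}$, the coordinate directions $r_1,\dots,r_{n-1}$ are feasible in both senses at $r^\dagger$. If $r^\dagger$ minimized \dref{odr}, it would be an unconstrained local minimum of the smooth objective along these directions, so $\partial\log\det(\sigma^2Q(r)^{-1})/\partial r_i|_{r^\dagger}=0$ for $i=1,\dots,n-1$, which by \dref{gvd} is ${\rm Tr}(Q(r^\dagger)^{-1}Q_i)=0$ for all such $i$; likewise optimality of $r^\dagger$ for \dref{oar} would give ${\rm Tr}(Q(r^\dagger)^{-2}Q_i)=0$ for all $i$ via \dref{gva}. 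Taking contrapositives and using uniqueness from Proposition \ref{prop1} gives $r^*_{\rm D}\neq r^\dagger$ (resp.\ $r^*_{\rm A}\neq r^\dagger$).

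The main obstacle is the relative-interior claim: one must combine the rank bookkeeping of Theorem \ref{thm2a} (which identifies $N\geq 2n-2$ as the exact threshold for $\dim\mathscr{F}=n-1$) with the observation that the preimage of the impulsive input is supported precisely on the index set $\{0,\dots,\lfloor N/2\rfloor\}$ labelling the distinct columns of $S$ / vertices of $\mathscr{F}$, so that \emph{all} barycentric weights of $r^\dagger$ are strictly positive. Once this is established, the remainder is the routine ``smooth objective, interior point, nonzero gradient $\Rightarrow$ not optimal'' argument.
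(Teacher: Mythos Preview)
Your proposal is correct and follows essentially the same approach as the paper: both establish that $r^\dagger$ lies in the relative interior of $\mathscr{F}$ (the paper via Lemma~\ref{lemb1}) by exhibiting it as a strictly positive convex combination of all the generators $\mathscr{E}\xi_j(1{:}n)$, and then use the nonvanishing gradient \dref{gvd}--\dref{gva} to rule out optimality at an interior point. The only cosmetic difference is that you obtain the weights $(1/N,2/N,\dots,2/N,1/N)$ by pushing the impulsive input through $h_3$ and $h_2$, whereas the paper reads off the identical weights directly from the orthogonality of the rows of $S$ to $\xi_0$.
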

\begin{cor}
	\label{prop7}
	Consider the DC kernel \dref{kdc}. Then we have
	$r^*_{\rm D}\neq r^\dagger~\mbox{and}~
	r^*_{\rm A}\neq r^\dagger$, if $N\geq 2n\!-\!2$.
\end{cor}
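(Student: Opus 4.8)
The plan is to obtain Corollary~\ref{prop7} as a direct consequence of Theorem~\ref{thm5}. Its hypothesis $N\ge 2n-2$ is exactly the one required there, and the DC kernel \dref{kdc} is positive definite for $c>0$, $0<\lambda\le 1$, $|\rho|<1$ (see \dref{eq:DCinv} and \cite{Chen2012}), so the only thing left to check is that at least one of the values ${\rm Tr}(Q(r^\dagger)^{-1}Q_i)$, $1\le i\le n-1$, is nonzero, and likewise at least one of the values ${\rm Tr}(Q(r^\dagger)^{-2}Q_i)$; I would verify this with $i=1$. (One may take $\rho\neq 0$, since at $\rho=0$ the DC kernel degenerates to a diagonal kernel and the identity $r^*_{\rm D}=r^*_{\rm A}=r^\dagger$ is then the content of Theorem~\ref{thm3}.)

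Since $r^\dagger=[\mathscr{E},0,\dots,0]^T$, by \dref{eq:Q} we have $Q(r^\dagger)=\mathscr{E}I_n+\sigma^2P^{-1}$, which by the explicit form \dref{eq:DCinv} is a symmetric positive definite \emph{tridiagonal} matrix whose $(j,j+1)$ entry equals $-\sigma^2\rho/\big(c(1-\rho^2)\lambda^{(2j+1)/2}\big)$; in particular all its off-diagonal entries are nonzero and share a common sign. The key step is to pass from this to the conclusion that the super-diagonal entries $[Q(r^\dagger)^{-1}]_{j,j+1}$, $j=1,\dots,n-1$, are again all nonzero with one common sign, and similarly for $[Q(r^\dagger)^{-2}]_{j,j+1}$; then no cancellation can occur in ${\rm Tr}(Q(r^\dagger)^{-1}Q_1)=2\sum_{j=1}^{n-1}[Q(r^\dagger)^{-1}]_{j,j+1}$ or in ${\rm Tr}(Q(r^\dagger)^{-2}Q_1)=2\sum_{j=1}^{n-1}[Q(r^\dagger)^{-2}]_{j,j+1}$, so both traces are nonzero and Theorem~\ref{thm5} yields $r^*_{\rm D}\neq r^\dagger$ and $r^*_{\rm A}\neq r^\dagger$. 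I would establish the sign pattern in one of two standard ways: (i) via Usmani's closed-form formula for the inverse of a tridiagonal matrix, in which $[Q(r^\dagger)^{-1}]_{i,i+1}$ is $(-1)^{2i+1}$ times the off-diagonal entry times a ratio of leading and trailing principal minors of $Q(r^\dagger)$, all strictly positive by positive definiteness; or (ii) by noting that, up to a $\pm1$ diagonal similarity, $Q(r^\dagger)$ is an irreducible Stieltjes matrix, so that $Q(r^\dagger)^{-1}$ and $Q(r^\dagger)^{-2}=Q(r^\dagger)^{-1}Q(r^\dagger)^{-1}$ are, up to the same similarity, entrywise strictly positive.

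The main obstacle is precisely this last step: knowing only that $Q(r^\dagger)$ has a nonzero off-diagonal entry does not by itself prevent the sums $\sum_j[Q(r^\dagger)^{-1}]_{j,j+1}$ and $\sum_j[Q(r^\dagger)^{-2}]_{j,j+1}$ from vanishing, and the argument genuinely exploits the \emph{tridiagonal} structure of $P^{-1}_{\rm DC}$ --- the constant sign of its off-diagonals together with positivity of all principal minors of $Q(r^\dagger)$ --- which a generic nondiagonal kernel need not provide. (For $\rho>0$ with arbitrary $N$, or $\rho<0$ with even $N$, the statement is already available from the tridiagonal-inverse analysis behind Theorem~\ref{thm6} and Corollary~\ref{prop6}; invoking Theorem~\ref{thm5} here is what additionally covers $\rho<0$ with odd $N\ge 2n-2$.) A minor remaining point is the exclusion of $\rho=0$, dealt with above.
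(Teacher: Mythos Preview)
Your proposal is correct and follows essentially the same route as the paper: both arguments reduce to Theorem~\ref{thm5} by exploiting the tridiagonal structure of $Q(r^\dagger)=\mathscr{E}I_n+\sigma^2P_{\rm DC}^{-1}$ to show that the super/sub-diagonal entries of $Q(r^\dagger)^{-1}$ and $Q(r^\dagger)^{-2}$ carry a fixed sign pattern (positive when $\rho>0$, alternating when $\rho<0$), so the relevant traces cannot vanish. The only cosmetic difference is the tool invoked for that sign pattern --- the paper cites Theorem~\ref{thma3} (Meurant's result on inverses of tridiagonal matrices), whereas you propose Usmani's formula or the Stieltjes/$M$-matrix viewpoint via a $\pm1$ diagonal similarity --- and the paper checks all $i$ while you check $i=1$, which already suffices for Theorem~\ref{thm5}.
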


Theorems \ref{thm6} to \ref{thm5} and Corollaries \ref{prop6} to \ref{prop7} show that there exist
cases such that  $r^*_{\rm D}$ and $r^*_{\rm A}$ are no longer
$r^\dagger$, implying that the information on the correlation of the
impulse response indeed has influence on the input design.

Finally, one may wonder whether it is possible to claim that for all nondiagonal kernels, $r^*_{\rm D}$ and $r^*_{\rm A}$
are \emph{not} equal to $r^\dagger$. Unfortunately, this claim is not true as can be seen from the following counter example.

\begin{example}
 Let $\mathscr{E}=1$, $\sigma^2=1$, and
\begin{align}
P^{-1}= \left[
\begin{array}{ccc}
1 & \frac12&-\frac18\\
\frac12 &1 &-\frac12\\
-\frac18&-\frac12&1
\end{array}
\right],\label{ce}
\end{align}
which has the eigenvalues $0.3526,0.875$, and $1.7724$, and thus is positive definite, and yields
\begin{align*}
Q(r^\dagger)^{-1}
= \left[
\begin{array}{ccc}
\frac8{15} & -\frac2{15}&0\\
-\frac2{15} &\frac{17}{30} &\frac2{15}\\
0&\frac2{15}&\frac8{15}
\end{array}
\right].
\end{align*}
Then it can be shown that ${\rm Tr} (Q(r^\dagger)^{-1}Q_1)=0$ and ${\rm Tr} (Q(r^\dagger)^{-1}Q_2)=0$, which implies that the gradient of $\log\det(\sigma^2Q(r)^{-1})$ at $r=r^\dagger$ is zero and thus $r^*_{\rm D}=r^\dagger$ for nondiagonal kernel matrix  \dref{ce} and for any $N\geq 3$.
\end{example}

\section{Numerical Simulation}
\label{sec:sim}
We illustrate by Monte Carlo simulations that the optimal input derived from the proposed method can improve the quality of the regularized FIR model estimate \dref{rls} in contrast with the white noise input.


\subsection{Test Systems}
We first use the method in \cite{Chen2012,Pillonetto2015} to generate
1000 30th order LTI systems. For each system, we truncate its impulse response at the order $50$ and obtain a FIR model of order 50 accordingly, which is treated as the test system. In this way, we generate 1000 test systems.

\subsection{Preliminary Data-Bank}

For each test system, we generate a preliminary data record as follows, whose usage is to get a preliminary estimate of the kernel matrix $P(\eta)$ and the noise variance $\sigma^2$ necessary for the input design. We simulate each test system with a white Gaussian noise input with $N=50$ and $\mathscr{E}=10$, treat the unknown inputs according to \dref{eq:assoninicond}, and get the noise-free output with $N=50$. The noise-free output is then corrupted by an
additive white Gaussian noise.
The signal-to-noise ratio  (SNR),
i.e., the ratio between the variance of the noise-free output and
the noise, is uniformly distributed over $[1,10]$. In this way, we get 1000 preliminary data records. 

\subsection{Optimal Input}

For each test system and preliminary data record, we derive the optimal input as follows: 
\begin{enumerate}
	\item[1)] We consider FIR model with order $n=50$ and  estimate the noise variance $\sigma^2$ by LS method, as described in \cite{Goodwin1992,Chen2012} and 
	the hyperparameter $\eta$ of the TC kernel matrix $P(\eta)$  given in \dref{ktc} by the EB method \dref{eq:EB}.
	
	\item[2)] Then for the obtained kernel matrix and noise variance, we derive the optimal input $u^*_{\rm D},u^*_{\rm A},$ and $u^*_{\rm E}$ by solving the input design problems \dref{dop_o}, \dref{aop_o}, \dref{eop_o}, respectively, with $N=50$ and $\mathscr{E}=10$ and the proposed two-step procedure.
For comparison, we also derive the optimal input in \cite{Fujimoto2016} with their gradient-based algorithm. It is worth to recall from (\ref{eq:FS_assoninicond}) that the unknown inputs are set to zero  in \cite{Fujimoto2016}.


\end{enumerate}

\begin{figure}[t]
	\includegraphics[scale=0.47]{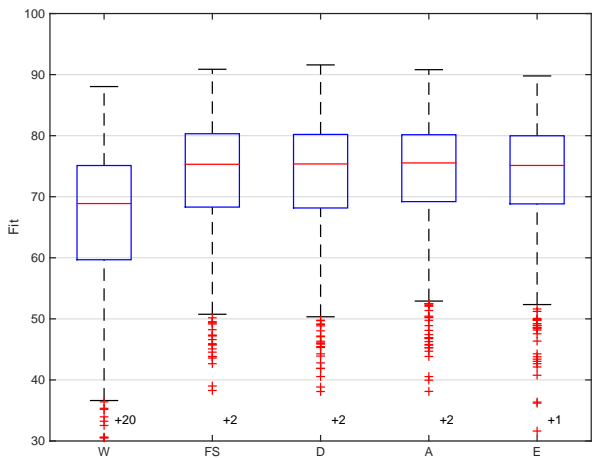}
	\caption{Boxplots of the 1000 fits of the RLS estimates based on the preliminary data-bank and test databank.}
	\label{f1}
\end{figure}

\subsection{Test Data-Bank}

For each test system and each optimal input, we generate a test data record as follows, whose usage is to illustrate the efficacy of the input design. We simulate each test system with the optimal input, treat the unknown inputs according to \dref{eq:assoninicond}, and get the noise-free output with $N=50$. The noise-free output is then corrupted by an
additive white Gaussian noise with the same variance as the additive white Gaussian noise in the preliminary data record.
In this way, for each test system and each optimal input, we get 1000 test data records.

\subsection{Simulation Results}

The simulation results are summarized in  Table \ref{tab1} and Fig. \ref{f1}, where W, FS, D, A, E are used to denote the corresponding simulation results for different inputs, respectively, and W denotes the result corresponding to the white noise input in the preliminary data-bank. In particular,
Table \ref{tab1} shows the average model fits and Fig. \ref{f1} shows the
boxplots of the 1000 model fits, where the model fit \citep{Ljung2012} is defined as follows
\begin{align*}
\mbox{\rm Fit} =100\times \left( 1 - \frac{\|\widehat{\theta}_N^{\rm R} - \theta_0\|}{\|\theta_0-\tilde{\theta}_0\|}\right),~~\tilde{\theta}_0=\frac{1}{50}\sum_{k=1}^{50}g_k^0
\end{align*}
where $\widehat{\theta}_N^{\rm R}$ represents the regularized FIR model estimate \dref{rls} corresponding to each data record, the regression matrix $\Phi$ is constructed by the input according to \dref{eq:assoninicond}, and the estimated noise variance and TC kernel matrix from the preliminary data record are used.

\subsection{Findings}
In contrast with the white noise input, all optimal inputs including the one given in \cite{Fujimoto2016} and $u^*_{\rm D},u^*_{\rm A},$ and $u^*_{\rm E}$ given in this paper improve the average fit of the regularized FIR model estimate (\ref{rls}) from 66 to 73 and are more robust.
An intuitive explanation for this improvement is that the data records generated by the optimal inputs may have higher average SNRs than the preliminary data record generated by the white noise input.
This guess is verified for the data records in this experiment.
In addition, the fits of the estimates of the data records generated by all optimal inputs are quite close.





\begin{table}[!ht]
	\centering
	\caption{Average fits of the regularized FIR model estimate \dref{rls}
		given in Fig. \ref{f1}
		based on 1000 Monte Carlo runs.}
	\vspace{1.5ex}
	\begin{tabu} to 0.48
		\textwidth{X[1,c]X[1,c]X[1,c]X[1,c]X[1,c]X[1,c]}
		\hline
		 W &  	\mbox{FS} & 		  D 	&	A	&	 E 	\\	\hline
		66.24	&	73.56	 	&	73.44 	&	73.87 	&	73.46 \\ \hline
	\end{tabu}
	\label{tab1}
\end{table}


\section{Conclusions}
\label{con}
In this paper, the input design of kernel-based regularization method for LTI system identification has been investigated by minimizing the scalar measures of the Bayesian MSE matrix. Under suitable assumption on the unknown inputs and with the introduction of a quadratic map, the non-convex optimization problem associated with the input design problem is transformed into a convex optimization problem. Then based on the global minima of the convex optimization problem, the optimal input is derived by solving the inverse image of the quadratic map. For some special types of fixed kernels, some analytic results provide insights on the input design and
also its dependence on the kernel structure.

\appendix
\textbf{Appendix A}

Appendix A contains the proof of the results in the paper, for which
the technical lemmas are placed in Appendix B.
The proofs of Propositions \ref{prop6} and \ref{prop2}
are straightforward and thus omitted.

\renewcommand{\thesection}{A}

\subsection{Proof of Theorem  \ref{thm1}}

\setcounter{thm}{0}
\renewcommand{\thethm}{A\arabic{thm}}
\setcounter{lem}{0}
\renewcommand{\thelem}{A\arabic{lem}}
\setcounter{rem}{0}
\renewcommand{\therem}{A\arabic{rem}}

First, we introduce the cyclic permutation matrix $C\in\mathbb R^N$  defined by
\begin{align}
C\eq \left[
\begin{array}{cccccc}
0&I_{N-1}\\
1&0
\end{array}
\right],
\end{align}
which has the following properties
\begin{align}
C^iC^j=C^{i+j},C^N=I_N,(C^i)^T=C^{N-i}.\label{cpmp}
\end{align}
Note that $L_0=I_N$ and $L_j=(C^j+(C^j)^T)/2$ is real symmetric for $ j=1,\dots, n-1$.
Clearly, $L_i$ commutes with $L_j$ by using the properties \dref{cpmp} for $ i,j=1,\dots, n-1$ and $i\neq j$.
It follows from Theorem \ref{thm9} in Appendix B that
there exists an orthogonal matrix $W$
such that
\begin{align}
W^T L_j W = S_j,~j=0,\dots, n-1,
\end{align}
where $S_j=\diag[s_{j0},\cdots,s_{j,N-1}]$ is a real diagonal matrix.
Therefore, we have
\begin{align}
f_j(u) = u^TWS_jW^Tu
=\sum_{i=0}^{N-1}s_{ji}z_i^2
\end{align}
where $z=[z_0,z_1,\cdots,z_{N-1}]^T=W^Tu$.
We obtain
\begin{align}
f(u)\!=\!
\underbrace{\left[
	\begin{array}{cccc}
	s_{00}&s_{01}&\cdots&s_{0,N-1}\\
	s_{10}&s_{11}&\cdots&s_{1,N-1}\\
	\vdots &\vdots&\ddots&\vdots\\
	s_{n-1,0}&s_{n-1,1}&\cdots&s_{n-1,N-1}
	\end{array}
	\right]}_{S}
\!\!
\left[
\begin{array}{c}
z_0^2\\
z_1^2\\
\vdots\\
z_{N-1}^2\\
\end{array}
\right].
\end{align}    	
Since $W^TW=I_N$,  the orthogonal linear mapping $h_3(\cdot)$ has the property that
 the image is $\mathscr{Z}=\{z|z^Tz= \mathscr{E}\}$.
Then the image of $h_2(\cdot)$ under $\mathscr{Z}$ is the polytope $\mathscr{X}$ given in \dref{ih2} by the definition of $h_2(\cdot)$. Finally, the conclusion that $\mathscr{F}$ is a polytope follows from that a linear transform of a polytope is still a polytope \cite[p. 18, 2.5]{Brondsted1983}. This completes the proof.

\subsection{Proof of Theorem \ref{thm2a}}

We only prove the case for even $N$. The proof for odd $N$ is similar and thus omitted.

For convenince, we place the construction of the matrices $W$ and $S$ in Lemma \ref{lemb4}. So we first consider the claim $\rank(S)=\min(N/2+1,n)$, which follows from that the vectors $\xi_0,\xi_1,\cdots,\xi_{\frac{N}{2}}$ are orthogonal to each other and
$\xi_j=\xi_{N-j}$ for $ j=0,\dots, N$.

Then note that the polyhedron $\mathscr{X}$ in Theorem \ref{thm1}
can be written as a polytope
\begin{align}
\mathscr{X}=\mathbf{conv}\left\{d_0,d_1,\cdots,d_{N-1}\right\}\label{simplex},
\end{align}
where for $ j=0,\dots, N-1$, $d_j$ is the vector whose elements are zero except the $j\!+\!1$-element that is equal to $\mathscr{E}$. Since $\cos(lj\varpi) = \cos((N-l)j\varpi)$  for any integer $l$ and $ j=0,\dots, N\!-\!1$,
the matrix $S$ can be expressed in the form of \eqref{seven}.

Now we denote by $\mathscr{C}$ the convex hull of the column vectors of $S$ in \eqref{seven} and we will show that $\mathscr{F}=\mathscr{C}$. On the one hand, note that
\begin{align}
Sd_j = Sd_{N-j}=\mathscr{E}\xi_{j}(1\!:\!n),~j=0,\dots, N/2.\label{sd}
\end{align}
This implies that the image of \dref{simplex} under the linear transform $h_1(x)=Sx$ is included in $\mathscr{C}$
and hence $\mathscr{F}\subset \mathscr{C}$.
On the other hand, for each element $r$ in $\mathscr{C}$,
there exist
$a_j\geq 0, j=0,\dots, N/2$ with $\sum_{j=0}^{N/2}a_j=1$ such that
$$r=\sum_{j=0}^{N/2}a_j\mathscr{E}\xi_j(1\!:\!n)
=S\Big(\sum_{j=0}^{N/2}a_jd_j\Big)\in \mathscr{F}$$
where \dref{sd} is used.
This concludes $\mathscr{C}\subset\mathscr{F}$.
Therefore, we have $\mathscr{F}=\mathscr{C}$. This completes the proof.

\subsection{Proof of Proposition \ref{thm2}}
First, we show that $r^\dagger\in\mathscr{F}$.
This can be done by finding a point $x^0\in\mathscr{X}$ such that $Sx^0=r^\dagger$.
We can choose $x^0=\mathscr{E}[1,\cdots,1]^T/N\in\mathscr{X}$.
Applying the property that $\xi_0$ is orthogonal to the vectors $\{\xi_j,1\leq j\leq N/2\}$ for even $N$ or  $\{\xi_j,1\leq j\leq (N\!-\!1)/2\}$ for odd $N$ yields $Sx^0=r^\dagger$.
Then we check the gradient of $\log\det(\sigma^2Q(r)^{-1})$
with respective to the last $n\!-\!1$ variables of $r$ is zero at $r=r^\dagger$.
For ridge kernels, we have $\sigma^2P^{-1}= \sigma^2I_n/c $
and accordingly $Q(r^\dagger)$ is $(\mathscr{E}+\sigma^2/c)I_n$.
Thus we find that
\begin{align*}
 -{\rm Tr} (Q(r^\dagger)^{-1}Q_i)=0,~ i=1,\dots, n\!-\!1.
\end{align*}
Since the problem \dref{odr} is strictly convex, $r^\dagger$ is the unique stationary point and $r^*_{\rm D} =r^\dagger$.

The proof for $r^*_{\rm A}=r^\dagger$ can be derived in a similar way. 

\subsection{Proof of Theorem \ref{thm3}}
The proof of Theorem \ref{thm3} is the same as that of Proposition \ref{thm2} since  $Q(r^\dagger)$ is still diagonal in this case.

\subsection{Proof of Theorem \ref{thm6}}
First, we consider the case $e_j>0$ for $j=2, \dots, n$.
In this case, we have each element of $Q(r^\dagger)^{-1}=(\mathscr{E}I_n +\sigma^2P^{-1})^{-1}$ is positive by Theorem \ref{thma3}.
This implies that each element of the gradient at  $r=r^\dagger$ is negative by \dref{gvd} and the definition of $Q_i$. It follows that
$$[\nabla \log\det(\sigma^2Q(r^\dagger)^{-1})]^T
{\xi}_0(2:n)
< 0,$$
since all elements of ${\xi}_0(2:n)$ are equal to one.
This violates Lemma \ref{lemb2} in Appendix B, which gives the necessary and sufficient condition for $r^*_{\rm D}=r^\dagger$.
Thus $r^*_{\rm D}\neq r^\dagger$ for this case.

Then we consider  the case where $e_j<0$ for $j=2, \dots, n$ and $N$ is even.
In this case, we have $(Q(r^\dagger)^{-1})_{ij}>0$ when $|i-j|$ is even and $(Q(r^\dagger)^{-1})_{ij}<0$ when $|i-j|$ is odd by Theorem \ref{thma3}.
This yields that the sign of the $i$-th element
$\frac{\partial \log\det(\sigma^2 Q(r^\dagger)^{-1})}{\partial r_i}$ of the gradient  $\nabla \log\det(\sigma^2Q(r^\dagger)^{-1})$ is positive for odd $i$ and negative for even $i$.
This shows that
$$[\nabla \log\det(\sigma^2Q(r^\dagger)^{-1})]^T
{\xi}_{N/2}(2:n)
< 0$$
since the $i$-th element of ${\xi}_{N/2}(2:n)$ is $(-1)^{i}$.
This violates  the necessary and sufficient condition for $r^*_{\rm D}=r^\dagger$, and thus $r^*_{\rm D}\neq r^\dagger$ for this case.

The proof of $r^*_{\rm A}\neq r^\dagger$ is similar and thus omitted.

\subsection{Proof of Corollary \ref{prop6}}
The proof is trivial by noting (\ref{eq:DCinv}) and Theorem \ref{thm6}.


\subsection{Proof of Proposition \ref{prop5}}
It follows that
\begin{align*}
Q(r^\dagger)^{-1}\!=\!
\frac{1}{\det(Q(r^\dagger))}
\left[
\begin{array}{ccccc}
\mathscr{E}\!+\!\sigma^2p_2&\sigma^2e\\
\sigma^2e&\mathscr{E}\!+\!\sigma^2p_1
\end{array}
\right].
\end{align*}
Then we have
$-{\rm Tr} (Q(r^\dagger)^{-1}Q_1)=-2\sigma^2e/\det(Q(r^\dagger))$.
In this case, we have ${\xi}_j(2:n)={\xi}_j(2:2)=\cos(j\varpi)$.
As a result, there always exists an index $j$ such that $e$ and $\cos(j\varpi)$ have the same sign for  $j=0,\dots, N/2 $ when $N$ is even and for $j=0,\dots, (N\!-\!1)/2$ when $N$ is odd.
One obtains that $-e \cos(j\varpi)<0$ for this $j$,
which violates \dref{b19}.
Therefore,  $r^*_{\rm D}\neq r^\dagger$.
The proof of $r^*_{\rm A}\neq r^\dagger$ is similar and is omitted.

\subsection{Proof of Theorem \ref{thm5}}
It follows from Lemma \ref{lemb1} in Appendix B that
$0$ is an interior point of  $\mathscr{F}'=\{r'\in\mathbb{R}^{n-1}|[\mathscr{E}, r'^T]^T\in\mathscr{F}\}$
 when $N\geq 2n-2$.
Since $P^{-1}$ is nondiagonal, we have $Q(r^\dagger)^{-1}$ is also nondiagonal.
The condition that at least one of the values ${\rm Tr} (Q(r^\dagger)^{-1}Q_i)$, $ i=1, \dots, n-1$, is nonzero implies that
 $r^\dagger$ is not a stationary point of the problem \dref{odr}.
This means that $r^*_{\rm D}\neq r^\dagger$. The proof for $r^*_{\rm A}\neq r^\dagger$ is similar and thus omitted.

\subsection{Proof of Corollary \ref{prop7}}
For the DC kernel \dref{kdc}, by Theorem \ref{thma3} as used in the proof of Theorem \ref{thm6} we have
\begin{enumerate}[1)]
	\item When $\rho>0$, $(Q(r^\dagger)^{-1})_{ij}>0$ and $(Q(r^\dagger)^{-2})_{ij}>0$.
	\item When $\rho<0$, $(Q(r^\dagger)^{-1})_{ij}>0$ and $(Q(r^\dagger)^{-2})_{ij}>0$ if $|i-j|$ is even and $(Q(r^\dagger)^{-1})_{ij}<0$ and $(Q(r^\dagger)^{-2})_{ij}<0$ if $|i-j|$ is odd.
\end{enumerate}
This implies that ${\rm Tr} (Q(r^\dagger)^{-1}Q_i)\neq 0$ and ${\rm Tr} (Q(r^\dagger)^{-2}Q_i)\neq 0$ for all indices $i=1, \dots, n-1$.
As a result, the proposition is true by Theorem \ref{thm5}.

\section*{Appendix B}

\renewcommand{\thesection}{B}

\setcounter{subsection}{0}

\setcounter{thm}{0}
\renewcommand{\thethm}{B\arabic{thm}}
\setcounter{lem}{0}
\renewcommand{\thelem}{B\arabic{lem}}
\setcounter{rem}{0}
\renewcommand{\therem}{B\arabic{rem}}

This appendix contains the technical lemmas used in the proof in
Appendix A.
\begin{thm}
	\label{thm9}
	\citep[Theorem 9]{Jiang2016}
	Real symmetric matrices $\{I_N,A_1,\cdots,A_m\}$ with $A_i\in\mathbb R^{N\times N}$ and $i=1,\dots,m$ are simultaneously diagonalizable via an orthogonal congruent matrix if and only if $A_i$ commutes with $A_j$ for $i,j=1,2,\cdots,m$ and $i\neq j$.	
\end{thm}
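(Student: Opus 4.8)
The plan is to prove the two implications separately: necessity is immediate, and sufficiency is the substantive part, which I would establish by induction on $m$ with the spectral theorem for a single real symmetric matrix as the base case. Note first that the matrix $I_N$ plays no role: it commutes with everything and is already diagonal, so it is enough to work with the family $A_1,\dots,A_m$ and insist that the orthogonal congruence $W$ satisfies $W^TW=I_N$.

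For necessity, suppose an orthogonal $W$ satisfies $W^TA_iW=D_i$ with every $D_i$ diagonal. Then $A_i=WD_iW^T$, so $A_iA_j=WD_iD_jW^T=WD_jD_iW^T=A_jA_i$ because diagonal matrices commute; hence pairwise commutativity is forced.

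For sufficiency I would induct on $m$. The case $m=1$ is the classical spectral theorem: a real symmetric matrix has an orthonormal eigenbasis. For the inductive step, assume the result for every family of $m-1$ pairwise-commuting real symmetric matrices, and let $A_1,\dots,A_m$ be pairwise commuting. Decompose $\mathbb{R}^N=\bigoplus_\lambda E_\lambda$ into the eigenspaces $E_\lambda=\ker(A_1-\lambda I_N)$ of $A_1$; these are mutually orthogonal since $A_1$ is symmetric. For $x\in E_\lambda$ and $j\ge 2$ one has $A_1(A_jx)=A_j(A_1x)=\lambda A_jx$, so each $E_\lambda$ is invariant under every $A_j$. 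The restriction $A_j|_{E_\lambda}$ is symmetric for the Euclidean inner product inherited on $E_\lambda$, and these restrictions still commute pairwise, so by the induction hypothesis there is an orthonormal basis of $E_\lambda$ simultaneously diagonalizing $A_2|_{E_\lambda},\dots,A_m|_{E_\lambda}$; this basis also diagonalizes $A_1|_{E_\lambda}=\lambda I$. Concatenating these bases over the distinct eigenvalues $\lambda$ of $A_1$ gives a basis of $\mathbb{R}^N$ that is orthonormal (orthonormal within each $E_\lambda$, and orthogonal across distinct $E_\lambda$) and simultaneously diagonalizes $A_1,\dots,A_m$; letting $W$ have these vectors as columns yields an orthogonal $W$ with every $W^TA_iW$ diagonal. (An alternative is to pick generic scalars $t_i$, diagonalize $B=\sum_i t_iA_i$, and check that the eigenspaces of $B$ already refine all the $A_i$; the inductive argument is cleaner and avoids a genericity discussion.)

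The argument is routine linear algebra, so there is no real obstacle beyond bookkeeping; the only two points needing a moment's care are that commutativity gives $A_j$-invariance of the eigenspaces of $A_1$ (the one-line computation above) and that merging orthonormal bases of the mutually orthogonal eigenspaces of $A_1$ genuinely produces an orthonormal basis of $\mathbb{R}^N$. For the present paper the theorem is applied only to the commuting family $\{I_N,L_0,\dots,L_{n-1}\}$, and it delivers exactly the orthogonal matrix $W$ used in the proof of Theorem \ref{thm1}.
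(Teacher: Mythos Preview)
Your proof is correct and is the standard textbook argument for simultaneous orthogonal diagonalization of commuting real symmetric matrices. However, the paper does not prove this statement at all: Theorem~\ref{thm9} is quoted from \cite[Theorem~9]{Jiang2016} and appears in Appendix~B only as a supporting result, with no proof supplied. So there is nothing in the paper to compare your argument against; you have simply filled in a proof that the authors chose to cite rather than reproduce. Your inductive approach via the eigenspace decomposition of $A_1$ is entirely standard and sound, and the two delicate points you flag (invariance of eigenspaces under commuting operators, and orthonormality of the concatenated basis) are exactly the ones that need checking.
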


\begin{thm}
	\label{thma4}
    (\citet[Theorem 3.1]{Gray2006};\citet{Tee2007})
	Denote the circulant matrix $B$ generated by a row vector $b=[b_0,b_1,\cdots,b_{N-1}]$
	by
	\begin{align}
	B={\rm circ}(b)
	\eq \left[
	\begin{array}{ccccc}
	b_0&b_1&\ddots&b_{N-2}&b_{N-1}\\
	b_{N-1}&b_0&\ddots&b_{N-3}&b_{N-2}\\
	\ddots&\ddots&\ddots&\ddots&\ddots\\
	b_2&b_3&\ddots&b_0&b_1\\
	b_1&b_2&\ddots&b_{N-1}&b_0
	\end{array}
	\right].
	\end{align}	
	Then $B$ has unit eigenvectors
	\begin{align*}v^{(m)}&=\frac1{\sqrt{N}}\big[1,\exp(-\iota\varpi m),\cdots,\exp(-\iota\varpi (N\!-\!1) m)\big]^T, \\ m&=0,\cdots,N\!-1,\end{align*}
	where $\varpi=2\pi/N$ and $\iota$ is the imaginary unit ($\iota^2=-1$),
	and the corresponding eigenvalues
	\begin{align*}
	\tau^{(m)}&=\sum_{k=0}^{N-1}b_k\exp(-\iota mk\varpi)\\
	&=b[1,\exp(-\iota m\varpi),\cdots,\exp(-\iota m(N-1)\varpi)]^T
	\end{align*}
	and can be expressed by
	\begin{align*}
	B=A~\!\diag([\tau^{(0)},\cdots,\tau^{(N-1)}])A^{H}
	\end{align*}
	where $A=[v^{(0)},\cdots,v^{(N-1)}]$ is unitary and $A^H$ denotes the complex conjugate transpose of $A$. 
\end{thm}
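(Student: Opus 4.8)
\noindent The plan is to diagonalize $B=\mathrm{circ}(b)$ by writing it as a matrix polynomial in the cyclic permutation matrix $C$ (as in Appendix~A, with $C^N=I_N$) and then using the fact that the discrete Fourier vectors $v^{(m)}$ are eigenvectors of $C$.

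First I would establish the identity $B=\sum_{k=0}^{N-1}b_k C^k$. This is an entrywise check: since $(C^k)_{ij}=1$ precisely when $j-i\equiv k\pmod N$, we get $\left(\sum_{k=0}^{N-1}b_k C^k\right)_{ij}=b_{(j-i)\bmod N}$, which is exactly the $(i,j)$ entry of $\mathrm{circ}(b)$ as defined. Next I would verify that $C v^{(m)}=\exp(-\iota\varpi m)\,v^{(m)}$ for each $m$: for the first $N-1$ coordinates this is the shift relation $(Cv^{(m)})_i=(v^{(m)})_{i+1}=\exp(-\iota\varpi m)(v^{(m)})_i$, while for the last coordinate the wrap-around row of $C$ together with $\exp(-\iota\varpi Nm)=\exp(-\iota 2\pi m)=1$ yields the same relation. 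Hence $B v^{(m)}=\sum_{k=0}^{N-1}b_k C^k v^{(m)}=\left(\sum_{k=0}^{N-1}b_k\exp(-\iota mk\varpi)\right)v^{(m)}=\tau^{(m)} v^{(m)}$, and reading $\sum_k b_k\exp(-\iota mk\varpi)$ as the inner product of the row vector $b$ with $[1,\exp(-\iota m\varpi),\dots,\exp(-\iota m(N-1)\varpi)]^T$ gives the stated closed form for $\tau^{(m)}$.

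Then I would show that $A=[v^{(0)},\dots,v^{(N-1)}]$ is unitary, i.e., that $\{v^{(m)}\}_{m=0}^{N-1}$ is orthonormal. This is the orthogonality of characters: $(v^{(m)})^{H}v^{(l)}=\frac1N\sum_{j=0}^{N-1}\exp(\iota\varpi j(m-l))$, which equals $1$ when $m=l$, and when $m\neq l$ (so $m-l\not\equiv 0\pmod N$) is a finite geometric sum with ratio $\exp(\iota\varpi(m-l))\neq 1$ whose $N$-th power is $1$, hence equals $0$. Collecting the eigenrelations $B v^{(m)}=\tau^{(m)}v^{(m)}$ columnwise gives $BA=A\,\diag([\tau^{(0)},\dots,\tau^{(N-1)}])$, and since $A^{-1}=A^{H}$ we obtain $B=A\,\diag([\tau^{(0)},\dots,\tau^{(N-1)}])\,A^{H}$, which completes the proof.

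This is a classical fact and there is no genuine obstacle; the only care needed is bookkeeping — keeping the shift $C$ consistent with its definition in Appendix~A, making sure the eigenvalue comes out as $\exp(-\iota\varpi m)$ rather than its conjugate under the stated sign convention for $v^{(m)}$, and correctly treating the single wrap-around row of $C$. One could alternatively invoke the general fact that $C$ is normal ($CC^{H}=C^{H}C=I_N$) and therefore unitarily diagonalizable a priori, but the explicit Fourier computation above is what produces the eigenvectors and eigenvalues in closed form, which is what the statement asserts.
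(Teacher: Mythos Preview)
Your proof is correct and is the standard argument for diagonalizing circulant matrices via the cyclic shift $C$ and the discrete Fourier basis. Note, however, that the paper does not supply its own proof of this statement: Theorem~\ref{thma4} is quoted in Appendix~B as a known result, with citations to \citet[Theorem~3.1]{Gray2006} and \citet{Tee2007}, and is then used as a tool in the proof of Lemma~\ref{lemb4}. So there is no in-paper proof to compare against; your argument is precisely the classical one that those references contain.
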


\begin{thm}
	\label{thma3}
	Let $A$ be a symmetric and positive definite tridiagonal matrix of dimension $n$,
	\begin{align}
	A=\left[
	\begin{array}{ccccc}
	a_1&-b_2&&&\\
	-b_2&a_2&-b_3&&\\
	&\ddots&\ddots&\ddots&\\
	&&-b_{n-1}&a_{n-1}&-b_n\\
	&&&-b_n&a_n
	\end{array}
	\right]
	\end{align}
	and denote the $(i,j)$-element of $A^{-1}$ by $(A^{-1})_{ij}$.
	Then we have
	$(A^{-1})_{ij}>0$ when $b_l> 0$  for $ l=2, \dots, n$,
	while $(A^{-1})_{ij}<0$ if $|i-j|$ is odd and
	$(A^{-1})_{ij}>0$ if $|i-j|$ is even
	when $b_l< 0$  for $ l=2, \dots, n$.
	
\end{thm}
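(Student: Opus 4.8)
The plan is to reduce both sign patterns to one classical fact: a symmetric positive definite tridiagonal matrix whose off-diagonal entries are nonpositive (a Stieltjes matrix) has an entrywise nonnegative inverse, and the inverse is in fact entrywise positive once the matrix is irreducible. With that fact in hand, the case $b_l>0$ is immediate and the case $b_l<0$ follows by a diagonal sign conjugation.

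\textbf{The case $b_l>0$.} Here the off-diagonal entries of $A$ are $-b_l<0$, so $A$ is a symmetric Stieltjes matrix. Write $A=sI_n-B$ with $s=\max_{1\le i\le n}a_i$. Then $B$ is entrywise nonnegative: its diagonal entries are $s-a_i\ge 0$ and its off-diagonal entries are the $b_l>0$. Moreover the off-diagonal sparsity pattern of $B$ is that of the path graph on $n$ vertices, which is connected, so $B$ is irreducible. Since $A$ is positive definite, $\lambda_{\min}(A)=s-\lambda_{\max}(B)>0$, and since $B$ is nonnegative its spectral radius equals $\lambda_{\max}(B)$; hence $\rho(B)<s$ and the Neumann series converges,
\begin{align}
A^{-1}=\frac1s\sum_{k=0}^{\infty}\left(\frac{B}{s}\right)^{k}\ge 0
\end{align}
entrywise. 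Irreducibility of $B$ gives, for each pair $(i,j)$, some power $B^{k}$ with a strictly positive $(i,j)$ entry, so $(A^{-1})_{ij}>0$ for all $i,j$, which is the first claim.

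\textbf{The case $b_l<0$.} I would remove the sign by conjugating with the orthogonal sign matrix $D=\diag\left((-1)^{0},(-1)^{1},\dots,(-1)^{n-1}\right)$, for which $D^{-1}=D$. Then $\widetilde A:=DAD$ is again symmetric positive definite (it is congruent, indeed similar, to $A$), its diagonal is unchanged, and its $(i,i+1)$ entry is $(-1)^{i-1}(-1)^{i}(-b_{i+1})=b_{i+1}<0$. Thus $\widetilde A$ is a Stieltjes matrix with nonzero off-diagonals, so by the case already treated $(\widetilde A^{-1})_{ij}>0$ for all $i,j$. From $A^{-1}=D\widetilde A^{-1}D$ we obtain $(A^{-1})_{ij}=(-1)^{i+j}(\widetilde A^{-1})_{ij}$, whose sign is $(-1)^{|i-j|}$; this is exactly the second claim, and completes the proof.

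\textbf{Main obstacle and an alternative.} The only step that is not entirely mechanical is upgrading $A^{-1}\ge 0$ to $A^{-1}>0$: this is precisely where one uses that every $b_l$ is nonzero, so that the path graph is connected and $B$ is irreducible; without this one obtains only nonnegativity. A purely computational alternative avoids M-matrix theory by invoking the explicit formula for the inverse of a tridiagonal matrix: for $i\le j$,
\begin{align}
(A^{-1})_{ij}=\left(\prod_{k=i+1}^{j}b_{k}\right)\frac{\theta_{i-1}\,\phi_{j+1}}{\theta_n},
\end{align}
where $\theta_m$ is the $m\times m$ leading principal minor of $A$, $\phi_m$ is the principal minor on rows and columns $m,\dots,n$, and the empty product equals $1$ when $i=j$. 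Positive definiteness of $A$ forces $\theta_m>0$ and $\phi_m>0$ for all $m$, so the sign of $(A^{-1})_{ij}$ equals that of $\prod_{k=i+1}^{j}b_k$, which is positive when all $b_l>0$ and equals $(-1)^{j-i}=(-1)^{|i-j|}$ when all $b_l<0$; both claims follow at once. I would present the M-matrix argument as the main proof and record this explicit-formula derivation as a remark.
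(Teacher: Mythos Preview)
Your proof is correct. Both the M-matrix/Neumann-series argument and the explicit-formula alternative are valid; the sign-conjugation trick for the $b_l<0$ case is clean and correctly executed.

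The paper, however, does not prove this at all in the sense you do: its entire argument is the single sentence ``The result is obtained by applying Theorem 2.3 of \cite{Meurant1992} when $A$ is positive definite.'' Meurant's Theorem~2.3 is precisely the explicit closed-form inverse of a symmetric tridiagonal matrix in terms of the leading and trailing principal minors---i.e.\ your ``alternative'' formula $(A^{-1})_{ij}=\big(\prod_{k=i+1}^{j}b_k\big)\theta_{i-1}\phi_{j+1}/\theta_n$. So your \emph{alternative} is the paper's actual route, while your \emph{main} M-matrix argument is a genuinely different proof. The M-matrix route has the virtue of being self-contained and conceptual (positivity follows from irreducibility of the path graph rather than from tracking products of minors), and the sign-matrix conjugation neatly unifies the two cases; the explicit-formula route, by contrast, gives more: it identifies the exact magnitude of each entry and makes transparent why only the parity of $|i-j|$ matters when all $b_l$ share a sign. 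Either would be acceptable here; if you want to match the paper, promote your alternative to the main proof and cite Meurant for the formula.
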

\begin{proof}
	The result is obtained by applying Theorem 2.3 of \cite{Meurant1992} when $A$ is positive definite.
\end{proof}

\begin{thm}
	\label{oct}
	\citep[Section 4.2.3, page 139]{Boyd2004}
	Let $\varphi:\mathbb{R}^m\!\xra{}\!\mathbb{R}$ be a differentiable convex function, and let $\mathscr{B}\subset\mathbb{R}^m$ be a nonempty closed convex set.
	Consider the problem
	\begin{align}
	minimize~\varphi(h)~~subject~ to ~h\in \mathscr{B}.
	\end{align}
	A vector $h^*$ is optimal for this problem if and only if $h^* \in \mathscr{B}$ and
	\begin{align}
	\nabla \varphi(h^*)^T(y-h^*)\geq 0~~\mbox{for all}~~y\in \mathscr{B}
	\end{align}	
	where $$\nabla \varphi(\cdot)\eq\left[\frac{\partial \varphi(\cdot)}{\partial h_1},\cdots,\frac{\partial \varphi(\cdot)}{\partial h_{m}}\right]^T$$ is the gradient of $\varphi(\cdot)$ with respect to $h=[h_1,\cdots,h_m]^T$.
\end{thm}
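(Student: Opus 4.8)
The plan is to prove the two implications separately, using only the first-order characterization of differentiable convex functions together with the convexity of the feasible set $\mathscr{B}$; closedness will turn out to be irrelevant for the equivalence itself.

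For the ``if'' direction I would assume $h^*\in\mathscr{B}$ and $\nabla\varphi(h^*)^T(y-h^*)\ge 0$ for every $y\in\mathscr{B}$. Since $\varphi$ is differentiable and convex on $\mathbb{R}^m$, it lies above each of its tangent hyperplanes, i.e., $\varphi(y)\ge \varphi(h^*)+\nabla\varphi(h^*)^T(y-h^*)$ for all $y\in\mathbb{R}^m$. Combining this with the assumed inequality gives $\varphi(y)\ge\varphi(h^*)$ for every $y\in\mathscr{B}$, so $h^*$ is optimal.

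For the ``only if'' direction I would suppose $h^*$ is optimal and fix an arbitrary $y\in\mathscr{B}$. By convexity of $\mathscr{B}$, the whole segment $h^*+t(y-h^*)$ with $t\in[0,1]$ stays in $\mathscr{B}$, so the scalar function $g(t)=\varphi\big(h^*+t(y-h^*)\big)$ attains its minimum over $[0,1]$ at $t=0$. Hence the one-sided derivative $g'(0^+)=\lim_{t\downarrow 0}\frac{g(t)-g(0)}{t}$ is nonnegative, and by the chain rule (valid because $\varphi$ is differentiable) $g'(0)=\nabla\varphi(h^*)^T(y-h^*)$. Therefore $\nabla\varphi(h^*)^T(y-h^*)\ge 0$, and since $y\in\mathscr{B}$ was arbitrary the claimed condition holds.

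There is no genuine obstacle here: the only points requiring care are to invoke the supporting-hyperplane inequality for differentiable convex functions in the first part, and to use convexity of $\mathscr{B}$ -- rather than smoothness of a constraint boundary -- to guarantee feasibility along the segment in the second part. Closedness of $\mathscr{B}$ plays no role in the equivalence (it matters only for existence of a minimizer, which this statement does not assert). As this is the classical variational-inequality form of the first-order optimality condition, one may equally well simply cite \cite[Section 4.2.3]{Boyd2004}.
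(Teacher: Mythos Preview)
Your proof is correct and is essentially the standard argument found in \cite[Section 4.2.3]{Boyd2004}. Note, however, that the paper does not supply its own proof of this theorem: it is stated in Appendix~B purely as a cited result from Boyd and Vandenberghe, to be invoked later in the proof of Lemma~\ref{lemb2}. So there is nothing to compare against beyond the textbook proof, which you have reproduced faithfully (including the correct observation that closedness of $\mathscr{B}$ is not needed for the equivalence itself).
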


\begin{lem} \label{lemb4}
	The matrices $\{L_j,j=0,\cdots,n-1\}$ can be diagonalized simultaneously by the matrix $W$, i.e.,
	\begin{align}
	W^T L_j W = S_j=\diag(\xi_j),~ j=0,\dots, n-1.
	\end{align}
\end{lem}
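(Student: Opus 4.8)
The plan is to use that every $L_j$ is a real symmetric \emph{circulant} matrix, so that the whole family is simultaneously diagonalized by the DFT basis, and then to pass from the complex DFT eigenvectors to the real vectors $\xi_m,\zeta_m$ on each conjugate‑pair eigenspace.

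First I would record the circulant structure. Since $C$ is the cyclic permutation matrix and $(C^i)^T=C^{N-i}$, $C^N=I_N$ by \dref{cpmp}, we have $L_0=I_N$ and, for $1\le j\le n-1$, $L_j=\tfrac12\big(C^j+C^{N-j}\big)={\rm circ}(b^{(j)})$, where $b^{(j)}$ is the row vector with entries $\tfrac12$ in positions $j$ and $N-j$ (indexed from $0$) and $0$ elsewhere; in particular every $L_j$ is circulant. (Alternatively, the $L_j$ commute pairwise, being polynomials in $C$, so Theorem \ref{thm9} already guarantees a common orthogonal diagonalizer; the point of the lemma is to exhibit it explicitly.)

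Next I would invoke Theorem \ref{thma4}: all circulant matrices of order $N$ are diagonalized by the same unitary matrix $A=[v^{(0)},\dots,v^{(N-1)}]$, and the eigenvalue of $L_j$ attached to $v^{(m)}$ is
$$\tau_j^{(m)}=\tfrac12\big(e^{-\iota mj\varpi}+e^{-\iota m(N-j)\varpi}\big)=\tfrac12\big(e^{-\iota mj\varpi}+e^{\iota mj\varpi}\big)=\cos(mj\varpi),$$
using $e^{-\iota mN\varpi}=1$. These eigenvalues are real and satisfy $\tau_j^{(m)}=\tau_j^{(N-m)}$, so the conjugate pair $v^{(m)},\,v^{(N-m)}=\overline{v^{(m)}}$ spans the same real eigenspace of \emph{every} $L_j$. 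On that space the real mutually orthogonal vectors $\tfrac{\sqrt N}{2}\big(v^{(m)}+\overline{v^{(m)}}\big)=\xi_m$ and $\tfrac{\sqrt N}{2\iota}\big(\overline{v^{(m)}}-v^{(m)}\big)=\zeta_m$ form a basis, while for the self‑conjugate indices $m=0$ and (when $N$ is even) $m=N/2$ one keeps $\sqrt N\,v^{(m)}=\xi_m$. Using $\sum_{k=0}^{N-1}\cos^2(km\varpi)=\sum_{k=0}^{N-1}\sin^2(km\varpi)=N/2$ for $m\notin\{0,N/2\}$ (and $N$, $0$ respectively at $m\in\{0,N/2\}$), together with the vanishing of the cross sums, one checks that with the normalizations of Theorem \ref{thm2a} the matrix $W$ so assembled is orthogonal.

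Finally I would match up indices. Every column of $W$ lies in a common eigenspace of all the $L_j$; by construction the column $\xi_m$ sits in position $m$ for $0\le m\le N/2$ and the column $\zeta_m$ in position $N-m$, and both carry the $L_j$‑eigenvalue $\cos(mj\varpi)=\cos((N-m)j\varpi)$, which is precisely the $m$‑th, respectively $(N-m)$‑th, entry of $\xi_j$. Reading this off column by column gives $W^TL_jW=\diag(\xi_j)=S_j$ for $j=0,\dots,n-1$, as claimed. The only delicate point is this last bookkeeping: the reversed placement of the $\zeta_m$'s is exactly what makes $W^TL_jW$ come out with the entries of $\xi_j$ in their natural order; everything else is the standard DFT orthogonality computation.
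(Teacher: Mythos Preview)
Your proof is correct and follows essentially the same route as the paper's: identify each $L_j$ as a circulant matrix, read off the common DFT eigenvectors and the eigenvalues $\cos(mj\varpi)$ from Theorem~\ref{thma4}, and then pass from conjugate pairs $v^{(m)},v^{(N-m)}=\overline{v^{(m)}}$ to the real vectors $\xi_m,\zeta_m$ to obtain the orthogonal $W$ of Theorem~\ref{thm2a}. Your final bookkeeping step, matching the placement of $\zeta_m$ in column $N-m$ of $W$ with the $(N-m)$-th entry $\cos((N-m)j\varpi)=\cos(mj\varpi)$ of $\xi_j$, is exactly what the paper leaves implicit, so if anything your write-up is slightly more explicit on that point.
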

\begin{proof}
	First, we  consider the case where $N$ is even.
	It follows that $L_0=I_N={\rm circ}([1,0,\cdots,0])$ and $L_j={\rm circ}(l_j/2),~j=1,\cdots,n-1$, where $l_j$ is the $N$-dimensional row vector whose elements are zero except that the $j\!+\!1$-th and $N\!-\!j+\!1$-th elements are one. Then by Theorem \ref{thma4}, the eigenvalues of $L_j$ are
	\begin{align*}
	\tau^{(m,j)}&=l_j[1,\exp(-\iota m\varpi),\cdots,\exp(-\iota m(N-1)\varpi)]^T/2\\
	&=\exp(-\iota mj\varpi)/2 + \exp(-\iota m(N-j)\varpi)/2\\
	&=\cos(mj\varpi), \ m\!=\!0,\cdots,N\!-\!1,
	\end{align*}
 the eigenvectors $\{v^{(m)},m=0,\cdots,N-1\}$ is an orthonormal basis for all matrices $L_j,j=0,\cdots,n-1$. Moreover,
 we have $v^{(0)}=\xi_0/\sqrt{N}$ and
 $v^{(N/2)}=\xi_{\frac N2}/\sqrt{N}$ are real, and $\tau^{(m,j)}\!=\!\tau^{(N-m,j)}$, which implies that the eigenvectors $v^{(m)}$  and $v^{(N-m)}$ of $L_j$ correspond to the same eigenvalue $\cos(mj\varpi)$.
Thus the linear combinations $(v^{(m)}+{v}^{(N-m)})/2=\xi_m/\sqrt{N}$ and $\iota({v}^{(N-m)}-v^{(m)})/2=\zeta_m/\sqrt{N}$ of the eigenvectors $v^{(m)}$ and $v^{(N-m)}$ corresponding to the $m$-th and $(N\!-\!m)$-th eigenvalue $\cos(mj\varpi)$ of $L_j$ are real and orthogonal. 
Further we have
	 $|\xi_m|=|\zeta_m|=\sqrt{N/2}$,
	and thus $$\Big\{\xi_0,\sqrt{2}\xi_1,\cdots,\sqrt{2}\xi_{\frac{N-2}{2}},\xi_{\frac{N}{2}},\sqrt{2}\zeta_{\frac{N-2}{2}},\cdots,\sqrt{2}\zeta_1\Big\}\Big/\sqrt{N}$$
	is a group of real orthonormal eigenvectors of $L_j$ and the corresponding eigenvalues are $\{1,\cos(j\varpi),\cdots,\cos((N\!-\!1)j\varpi)\}$, which are actually the elements of $\xi_j$.
    This completes the proof for the case for even $N$. The proof for odd $N$ is similar and thus omitted.
\end{proof}

\begin{lem} \label{lemb2}
	The vector $r^\dagger\in \mathscr{F}$ is the solution of the problem
	\begin{align}
	minimize~\varphi(r)~~subject~ to ~r\in \mathscr{F} \label{cf}
	\end{align}
	where $\varphi(r)$ represents $\log\det( \sigma^2Q(r)^{-1})$ or ${\rm Tr}(\sigma^2 Q(r)^{-1})$,
	if and only if
	\begin{align}
	[\nabla \varphi(r^\dagger)]^T
	{\xi}_j(2:n)
	\geq 0\label{b19}
	\end{align}
	with $j=0, \dots, N/2 $ for even $N$ or $j=0, \dots, (N\!-\!1)/2$ for odd $N$,
	where $\nabla \varphi(r^\dagger)=\left[\frac{\partial \varphi(r^\dagger)}{\partial r_1},\cdots,\frac{\partial \varphi(r^\dagger)}{\partial r_{n-1}}\right]^T$.
\end{lem}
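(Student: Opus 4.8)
The plan is to apply the first-order optimality condition for a differentiable convex function over a convex set, Theorem~\ref{oct}, together with the explicit polytope description of $\mathscr{F}$ provided by Theorem~\ref{thm2a}. As a preliminary, note that for either $\varphi(r)=\log\det(\sigma^2 Q(r)^{-1})$ or $\varphi(r)={\rm Tr}(\sigma^2 Q(r)^{-1})$ the matrix $Q(r)$ is affine in $r$ and positive definite on $\mathscr{F}$ (which holds in every invocation of this lemma, e.g.\ whenever $P\succ 0$), so $\varphi$ is convex --- as in Proposition~\ref{prop1} --- and differentiable on a neighbourhood of $\mathscr{F}$ in $\mathbb{R}^n$ (its partial derivatives with respect to $r_1,\dots,r_{n-1}$ being given by \dref{gvd}--\dref{gva}, and the one with respect to $r_0$ entirely analogous); also $r^\dagger\in\mathscr{F}$ was already verified in the proof of Proposition~\ref{thm2}.

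Next I would reduce to the free coordinates. Since the first entry of each $\xi_j$ is $1$, every $r\in\mathscr{F}$ has $r_0=\mathscr{E}$, so $r-r^\dagger$ has vanishing first component for all $r\in\mathscr{F}$. Writing $r'=[r_1,\dots,r_{n-1}]^T$ so that $r'^\dagger=0$, Theorem~\ref{oct} says that $r^\dagger$ solves \dref{cf} if and only if $[\nabla_r\varphi(r^\dagger)]^T(r-r^\dagger)\geq 0$ for all $r\in\mathscr{F}$, where $\nabla_r\varphi$ is the full $n$-vector of partials. Because the first component of $r-r^\dagger$ is zero, the $\partial\varphi/\partial r_0$ contribution cancels and this reads $[\nabla\varphi(r^\dagger)]^T r'\geq 0$ for all $r\in\mathscr{F}$, with $\nabla\varphi(r^\dagger)=[\partial\varphi(r^\dagger)/\partial r_1,\dots,\partial\varphi(r^\dagger)/\partial r_{n-1}]^T$ the reduced gradient of the statement.

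Finally I would use the polytope structure of $\mathscr{F}$. By Theorem~\ref{thm2a}, $\mathscr{F}=\mathbf{conv}\{\mathscr{E}\xi_0(1\!:\!n),\dots,\mathscr{E}\xi_{N/2}(1\!:\!n)\}$ for even $N$, and the analogous hull running up to $\xi_{(N-1)/2}(1\!:\!n)$ for odd $N$. A linear functional is nonnegative over the convex hull of finitely many points if and only if it is nonnegative at each of those points, so $[\nabla\varphi(r^\dagger)]^T r'\geq 0$ for all $r\in\mathscr{F}$ is equivalent to $[\nabla\varphi(r^\dagger)]^T(\mathscr{E}\xi_j(2\!:\!n))\geq 0$ for $j=0,\dots,N/2$ (resp.\ $j=0,\dots,(N-1)/2$), and dividing by $\mathscr{E}>0$ yields exactly \dref{b19}. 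No hard estimates appear anywhere; the only step meriting care is the coordinate reduction --- verifying that Theorem~\ref{oct} applied in the ambient space with the full gradient collapses to the stated condition on the reduced gradient --- together with stating cleanly the positivity hypothesis on $Q(r)$ that makes $\varphi$ convex and differentiable. I do not expect any substantive obstacle.
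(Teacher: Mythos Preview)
Your proposal is correct and follows essentially the same route as the paper: apply Theorem~\ref{oct}, use $r_0=\mathscr{E}$ on $\mathscr{F}$ to reduce the optimality condition to the last $n-1$ coordinates, and then invoke the convex-hull description of $\mathscr{F}$ from Theorem~\ref{thm2a} to reduce the check to the extreme points $\xi_j(2\!:\!n)$. The paper's proof is the same argument, stated slightly more tersely.
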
	
\begin{proof}
Note that the first element of each member in $\mathscr{F}$ is $\mathscr{E}$ and hence the first element of the difference $r - r^\dagger$ is zero for any $r\in \mathscr{F}$.
In addition, we have the last $n-1$ elements of $r^\dagger$ are zero.
By applying Theorem \ref{oct}, we see that $r^\dagger$ is the solution of \dref{cf} if and only if
\begin{align}
[\nabla \varphi(r^\dagger)]^T (\tilde{r} - 0)\geq 0,\label{oc}
\end{align}
where $\tilde{r}$ is any convex linear combination of the points $\{{\xi}_j(2:n),j=0,\cdots,N/2\}$ when $N$ is even, namely,
$\tilde{r}=\sum_{j=0}^{N/2}a_j {\xi}_j(2:n)$ with $a_j\geq0$ and $\sum_{j=0}^{N/2}a_j=1$.
Clearly, the condition \dref{oc} is equivalent to \eqref{b19} for $j=0,\cdots,N/2$.
The case when $N$ is odd can be proved in a similar way and thus omitted.
\end{proof}

\begin{lem} \label{lemb1}
	The zero vector $0\in \mathbb{R}^{n-1}$ is an interior point of  $\mathscr{F}'=\{r'\in\mathbb{R}^{n-1}|[\mathscr{E}, r'^T]^T\in\mathscr{F}\}$ if $N\geq 2n\!-\!2$.
\end{lem}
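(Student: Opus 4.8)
The plan is to realize $r^\dagger=[\mathscr{E},0,\dots,0]^T$ as a convex combination with \emph{strictly positive} weights of \emph{all} the generators of $\mathscr{F}$ exhibited in Theorem~\ref{thm2a}, and to invoke the hypothesis $N\ge 2n-2$ only to guarantee that those generators affinely span the hyperplane $\{r:r_0=\mathscr{E}\}$, equivalently that $\mathscr{F}'$ is $(n-1)$-dimensional. A point of a full-dimensional polytope that is a strictly-positive convex combination of a generating set lies in the interior of the polytope, and this is precisely the assertion.

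\emph{Full dimensionality.} When $N\ge 2n-2$, the rank formula in Theorem~\ref{thm2a} yields $\rank(S)=n$: for even $N$ since $N/2+1\ge n$, and for odd $N$ since then $N\ge 2n-1$ and hence $(N+1)/2\ge n$. By \eqref{seven}--\eqref{sodd} the columns of $S$ are, up to repetition, the vectors $\xi_0(1\!:\!n),\dots,\xi_{\lfloor N/2\rfloor}(1\!:\!n)$, so these $\lfloor N/2\rfloor+1$ vectors span $\mathbb{R}^n$. Each of them has first entry $1$, hence their affine hull is the whole hyperplane $\{r_0=1\}$, and therefore the vectors $\mathscr{E}\xi_0(2\!:\!n),\dots,\mathscr{E}\xi_{\lfloor N/2\rfloor}(2\!:\!n)$ affinely span $\mathbb{R}^{n-1}$. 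By \eqref{conhe}--\eqref{conho}, $\mathscr{F}=\mathbf{conv}\{\mathscr{E}\xi_j(1\!:\!n)\}$, so $\mathscr{F}'=\mathbf{conv}\{\mathscr{E}\xi_j(2\!:\!n)\}$ is a full-dimensional polytope.

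\emph{A positive representation of $r^\dagger$.} Take $x^0=\tfrac{\mathscr{E}}{N}[1,\dots,1]^T\in\mathscr{X}$. As in the proof of Proposition~\ref{thm2}, $Sx^0=r^\dagger$ because $\sum_{i=0}^{N-1}\cos(ik\varpi)=0$ for $1\le k\le n-1$. On the other hand $Sx^0=\tfrac{\mathscr{E}}{N}\sum_{i=0}^{N-1}\xi_i(1\!:\!n)$, and the symmetry $\xi_i=\xi_{N-i}$ lets me collect this sum as
\[
r^\dagger=\mathscr{E}\sum_{j=0}^{\lfloor N/2\rfloor}\mu_j\,\xi_j(1\!:\!n),\qquad \sum_{j}\mu_j=1,\quad \mu_j>0,
\]
with $\mu_0=1/N$, $\mu_j=2/N$ for the intermediate indices, and in addition $\mu_{N/2}=1/N$ when $N$ is even; the identity $\sum_j\mu_j=1$ is checked directly in each parity. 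Dividing by $\mathscr{E}$ and dropping the (constant) first coordinate exhibits $0=\sum_{j}\mu_j\,\mathscr{E}\xi_j(2\!:\!n)$ as a strictly-positive convex combination of all generators of $\mathscr{F}'$.

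\emph{Conclusion and the main obstacle.} It then suffices to record the following convex-geometry fact: if $\mathscr{F}'=\mathbf{conv}\{w_0,\dots,w_M\}$ is full-dimensional and $0=\sum_j\mu_j w_j$ with $\mu_j>0$, $\sum_j\mu_j=1$, then $0\in\mathrm{int}\,\mathscr{F}'$. Indeed, for any $v\in\mathbb{R}^{n-1}$ affine spanning produces coefficients $\beta_j$ with $\sum_j\beta_j=0$ and $v=\sum_j\beta_j w_j$, whence $tv=\sum_j(\mu_j+t\beta_j)w_j\in\mathscr{F}'$ for $|t|$ small enough that every $\mu_j+t\beta_j$ stays positive, so a ball around $0$ lies in $\mathscr{F}'$. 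The only genuinely delicate step is the symmetrization in the preceding paragraph: one must fold $\sum_{i=0}^{N-1}\xi_i(1\!:\!n)$ into the $\lfloor N/2\rfloor+1$ distinct vectors correctly in the even and odd cases so that every resulting weight $\mu_j$ is positive and the weights sum to one; the rest is either quoted from Theorem~\ref{thm2a} or routine.
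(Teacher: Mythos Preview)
Your proof is correct and follows essentially the same route as the paper's: both establish full dimensionality of $\mathscr{F}'$ via $\rank(S)=n$ from Theorem~\ref{thm2a}, and both exhibit $0$ as a strictly-positive convex combination of the generators $\mathscr{E}\xi_j(2\!:\!n)$ through the same symmetrization of $\tfrac{\mathscr{E}}{N}\sum_i\xi_i(1\!:\!n)$. The only difference is cosmetic: the paper frames the final step as a contradiction (if $0$ lay on a facet it would need at most $n-1$ affinely independent generators), whereas you give the cleaner direct argument that a strictly-positive convex combination of an affinely spanning set lies in the interior.
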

\begin{proof}
		This conclusion is proved by considering two cases, respectively: 1) for even $N$ with $N\geq 2n\!-\!2$; 2) for odd $N$ with $N\geq 2n\!-\!1$.
		Let us explore the first case.
		It is clear that the dimension of $\mathscr{F}'$ is equal to or less than $n-1$.
		It follows from Theorem \ref{thm2a} that $\rank(S)=n$.
		This means that
		the dimension of $\mathscr{F}'$ is $n-1$
		and $\mathscr{F}'$ consists of all convex combinations of affinely independent  $n$ vectors from $\{\mathscr{E}\xi_j(2\!:\!n),j=0,\dots, N/2\}$ \citep[p. 14, Corollary 2.4]{Brondsted1983}.
	Now we will prove the result by contradiction.
	Assume that $0$ is not an interior point of $\mathscr{F}'$.
	Thus $0$ must be located on a facet of $\mathscr{F}'$ and the dimension of this facet is $n-2$ since the dimension of $\mathscr{F}'$ is $n-1$ and hence $0$ is a convex combination of at most $n-1$  affinely independent vectors from  $\{\mathscr{E}\xi_j(2\!:\!n),j=0,\dots, N/2\}$ \citep[Corollary 2.4]{Brondsted1983}.
	In contrast, $0$ can be expressed by a convex combination in the following way
	\begingroup
	\allowdisplaybreaks
	\begin{align}
	\nonumber
	0=\frac{1}{N}(\mathscr{E}S(2:n,:)) \xi_0
	&=\frac{1}{N}\big(\mathscr{E}\xi_0(2\!:\!n) + \mathscr{E}\xi_{\frac{N}{2}}(2\!:\!n)\big) \\
	&~~~~+ \frac{2}{N} \sum_{j=1}^{\frac N2-1}(\mathscr{E}\xi_j(2\!:\!n))\label{expr}
	\end{align}
	\endgroup
	due to the orthogonal vectors $\{\xi_j,j=0,1,\cdots\}$ and the expression \dref{seven} of $S$.
	Further, the expression \dref{expr} can  be rewritten as a convex combination of
	affinely independent $n$ vectors from $\{\xi_j(2\!:\!n),j=0,\dots, N/2\}$ and all the weights are positive since the remaining $N/2+1-n$ vectors of $\{\xi_j(2\!:\!n),j=0,\dots, N/2\}$ are convex combinations of these $n$ affinely independent vectors.
	This contradiction means that the proposition is true.
	The proof for odd $N$ with $N\geq 2n\!-\!1$ is similar and thus omitted.
\end{proof}


\end{document}